\documentclass{article}

\usepackage[preprint]{neurips_2026}

\usepackage[utf8]{inputenc} 
\usepackage[T1]{fontenc}    
\usepackage{hyperref}       
\usepackage{url}            
\usepackage{booktabs}       
\usepackage{amsfonts}       
\usepackage{nicefrac}       
\usepackage{microtype}      
\usepackage{xcolor}         
\usepackage{amsmath}
\usepackage{amsthm}
\usepackage{thmtools}
\usepackage{algorithm}
\usepackage{algpseudocode}
\usepackage{comment}
\usepackage{todonotes}
\usepackage{enumitem}

\usepackage{graphicx}
\usepackage{subcaption}

\usepackage{tikz}               
\usetikzlibrary{calc,arrows.meta,decorations.pathreplacing}
\usetikzlibrary{shapes,positioning,fit}
\usepackage{pgfplots}
\pgfplotsset{compat=1.18}

\newtheorem{theorem}{Theorem}
\newtheorem{lemma}[theorem]{Lemma}

\newtheorem{definition}{Definition}

\newcommand{\E}{\mathbb{E}}
\newcommand{\M}{\mathcal{M}}
\newcommand{\eps}{\varepsilon}

\newcommand{\ZZ}{\mathbb{Z}}

\DeclareMathOperator{\dist}{dist}

\DeclareMathOperator{\fracpart}{frac}
\newcommand{\dTV}{d_{\mathrm{TV}}}

\newcommand{\EE}{\mathbb{E}}
\newcommand{\PP}{\Pr}

\title{Dithered Gaussian Mechanism for Randomness-Efficient Differential Privacy}

\author{%
  Nikita P. Kalinin \\
  Institute of Science and Technology Austria\\
  \texttt{nikita.kalinin@ist.ac.at} \\
   \And
   Rasmus Pagh \\
   BARC \\
  University of Copenhagen \\
   \texttt{pagh@di.ku.dk} \\
  }

\begin{document}

\maketitle

\begin{abstract}
We present the \emph{dithered Gaussian mechanism}, a novel alternative to the discrete Gaussian mechanism for differential privacy that discretizes the private output rather than the noise distribution itself.
By interpreting this discretization as post-processing of the Gaussian mechanism, our construction directly inherits the privacy guarantees of the standard Gaussian mechanism while avoiding vulnerabilities caused by finite-precision floating-point outputs.
We show that the mechanism is provably randomness-efficient: by sampling the discretized output values directly, the number of high-quality random bits required for privacy can be reduced significantly and made independent of the noise level.
This is achieved by separating the randomness into two sources: a high-quality source used for the privacy-critical sampling step, and a high-performance public source, possibly known to the adversary, that supplies the additional randomness needed for randomized discretization.
This separation enables the use of cryptographically secure randomness without substantial performance loss.
As an application, we study model training with DP-SGD and show that cryptographically secure noise generation with reduced exposure to floating-point vulnerabilities can be achieved with modest practical overhead.
\end{abstract}

\section{Introduction}

Modern machine learning increasingly relies on sensitive individual-level data, making formal privacy guarantees essential.
Differential privacy provides one such guarantee by limiting how much any single data point can influence the output of an algorithm.
The Gaussian mechanism~\citep{dwork2006our} is one of the most widely used primitives for ensuring differential privacy.
It is especially central in private machine learning, where it forms the noise-addition step of differentially private stochastic gradient descent (DP-SGD)~\citep{abadi2016deep}.
Despite its ubiquity, the Gaussian mechanism has important practical limitations.
First, it is an idealized continuous mechanism, whereas real implementations run on finite-precision computers.
As a result, a faithful implementation of a continuous Gaussian distribution is impossible: naive samplers can leave detectable ``holes'' in the set of representable floating-point values, creating privacy vulnerabilities of the kind first observed for the Laplace mechanism~\citep{mironov2012significance} and later identified for the Gaussian mechanism itself~\citep{jin2022we}.
A second limitation is the amount of randomness required to sample Gaussian noise.
In the ideal continuous model, exact sampling requires infinitely many random bits, while high-precision implementations can still require many random bits in practice.
This cost becomes substantial at scale: recent efforts to train language models entirely under differential privacy, such as the \emph{VaultGemma} 1B model, required quadrillions of Gaussian random draws and petabytes of randomness to obtain a formal privacy guarantee~\citep{vaultgemma2025}.
At this scale, randomness generation itself can become a bottleneck~\citep{egan2024highspeed}.
Several discrete-valued mechanisms, most prominently the discrete Gaussian mechanism~\citep{canonne2020discretegaussian}, address finite-precision concerns by adding discrete noise.
However, these mechanisms can still require many privacy-critical random bits.
Moreover, because their privacy guarantees do not directly follow from those of the continuous Gaussian mechanism, they require separate privacy analyses for composition and amplification by subsampling, making them difficult to use as drop-in replacements for Gaussian noise in DP-SGD.
Unlike existing discrete mechanisms, which approximate Gaussian noise and typically require separate privacy analyses, our construction inherits the privacy guarantees of the standard Gaussian mechanism directly, making it compatible with existing Gaussian-based analyses.
At the same time, the mechanism avoids vulnerabilities arising from finite-precision floating-point representations and is provably randomness-efficient, improving on prior efficient constructions.
Finally, we apply the mechanism to DP-SGD and show that it enables cryptographically secure noise generation with reduced floating-point vulnerabilities and a modest time overhead.
Compared with methods that use floating-point Gaussian noise sampled from pseudorandom number generators without cryptographic guarantees, our method incurs an overhead of about $30\%$, and only about $20\%$ when compared with cryptographically secure noise generation in experiments on CIFAR-10.

\paragraph{Our contributions.}
We propose a drop-in replacement for the Gaussian mechanism that is cryptographically secure, resistant to floating-point attacks, and time-efficient.
This is achieved with a novel approach to noise addition that we call the \emph{dithered Gaussian mechanism}.
The mechanism is defined through the distribution of a rounded, dithered Gaussian output: we consider the value that would be obtained by adding Gaussian noise to the sensitive vector, adding a public random offset, or \emph{dither}, and rounding the result to a discrete grid determined by the same public offset.
Crucially, this discrete output distribution can be written down explicitly after conditioning on the dither, and our mechanism samples from it directly using high-quality private randomness, rather than sampling the intermediate continuous Gaussian noise.
The key observation is that the rounded output is a post-processing of the standard Gaussian mechanism, so it directly inherits the privacy guarantees of the Gaussian mechanism.
The public dither is not needed for privacy, but improves the quality of the discretization and helps reduce the amount of private randomness required for direct sampling.
In particular, we show that the number of private random bits can be made independent of the Gaussian noise scale, without significantly changing the distribution of the perturbation relative to the Gaussian mechanism.
This substantially reduces the cost of using cryptographically secure randomness.
Since the released output is discrete by construction (conditioned on the public dither), the mechanism avoids finite-precision issues associated with floating-point outputs.
Finally, we apply the mechanism to DP-SGD and show that it enables cryptographically secure private randomness with modest practical overhead.

\subsection{Related work}
\paragraph{Floating-point vulnerabilities and discrete mechanisms.}
A subtle but important issue in the implementation of differentially private mechanisms is the use of finite-precision floating-point arithmetic.
Mironov~\citep{mironov2012significance} showed that naive implementations of the Laplace mechanism can leave detectable ``holes'' in the set of representable noisy outputs, which can be sufficient to reconstruct a significant part of the underlying data.
Similar finite-precision issues have since been identified for other mechanisms, including the exponential mechanism~\citep{ilvento2020implementing}, Noisy Max~\citep{ding2025avoiding}, and the Gaussian mechanism~\citep{jin2022we}.
A related class of attacks exploits precision-dependent leakage: the precision of the released noisy value may depend on the exponent of the original value, thereby leaking information about the input~\citep{haney2022precision}.
Several implementation-level mitigations have been proposed~\citep{holohan2021secure,haney2022precision} and adopted in practical differential privacy libraries, including diffprivlib~\citep{diffprivlib}, OpenDP~\citep{opendp}, and the ``secure mode'' of Opacus~\citep{yousefpour2021opacus}.
These approaches improve the robustness of floating-point implementations, but they do not remove the underlying mismatch between continuous idealized mechanisms and finite-precision computation.

A more direct way to avoid this kind of vulnerability is to work only with values from a discrete domain.
Several discrete-valued mechanisms have been proposed for this purpose, including the discrete Laplace mechanism, also known as the geometric mechanism~\citep{ghosh2009universally}; the discrete Gaussian mechanism~\citep{canonne2020discretegaussian}; the binomial mechanism~\citep{dwork2006our,agarwal2018cpsgd}; and mechanisms based on  Poisson random variables, such as the Skellam mechanism~\citep{agarwal2021skellam}, the Skellam Mixture Mechanism~\citep{bao2022skellam}, and the Poisson Binomial Mechanism~\citep{chen2022poisson}.
By adding discrete noise and releasing discrete outputs, these mechanisms avoid the floating-point artifacts described above.
However, they have two important limitations from the perspective of our work.
First, they can still require substantial amounts of randomness for noise generation, unless one uses coarse approximations.
Second, mechanisms such as the discrete Gaussian, Skellam, and binomial mechanisms approximate Gaussian noise but do not automatically inherit the privacy guarantees of the continuous Gaussian mechanism.
Consequently, analyses for composition, amplification by subsampling, or other accounting procedures generally have to be established separately.
\paragraph{Randomness complexity.}
From a theoretical standpoint, the large randomness requirements of practical differentially private algorithms raise a natural question at the intersection of privacy and complexity: how much randomness is actually necessary for privacy? Canonne, Su, and Vadhan~\citep{canonne2025randomness} initiated the study of the \emph{randomness complexity of differential privacy}, showing that highly accurate private mechanisms can sometimes be implemented with surprisingly few random bits, in some settings as few as logarithmically many in the number of released statistics.
This contrasts with the common intuition that the randomness required for privacy should scale with the dimension of the output.
Their construction is based on a connection to \emph{secluded partitions} from computational geometry~\citep{vanderwoude2024secluded}, but is not computationally efficient.
More recently, Ghentiyala~\citep{ghentiyala2026} gave a computationally efficient alternative, at the cost of a polylogarithmic loss in the randomness-error trade-off.
While both approaches are theoretically interesting, we are not aware of any demonstrations of their practical value.

\paragraph{DP-SGD and discrete mechanisms.}
One of the most important applications of Gaussian noise in differential privacy is the training of machine learning models, most commonly through differentially private stochastic gradient descent (DP-SGD)~\citep{song2013stochastic,bassily2014private,abadi2016deep}.
The privacy guarantees of DP-SGD are typically established using privacy accountants that exploit the structure of the sampling procedure, including Poisson subsampling~\citep{abadi2016deep,mironov2017renyi,koskela2020computing,zhu2022optimal,lebeda2025avoiding}, shuffling~\citep{erlingsson2019amplification,feldman2022hiding,feldman2023stronger,wang2023privacy}, sampling without replacement~\citep{balle2018privacy,schuchardt2024unified}, and random allocations such as Balls-in-Bins~\citep{chua2024balls,feldman2025privacy}.
Many of these analyses are developed specifically for the Gaussian mechanism, making it difficult to replace Gaussian noise with a different noise distribution without redoing the privacy analysis.
Discrete mechanisms have also been studied in combination with DP-SGD, especially in federated learning with secure aggregation~\citep{agarwal2018cpsgd,kairouz2021distributed,agarwal2021skellam,bao2022skellam,chen2022poisson}.
In this setting, the main goals are to obtain integer-valued noisy gradient sums, which integrate naturally with secure aggregation, and to reduce communication complexity.
To the best of our knowledge, however, tight analyses of subsampled discrete mechanisms are not available even for the standard Poisson subsampling scheme.
Existing approaches therefore typically start from concentrated DP or R\'enyi DP guarantees for the underlying discrete mechanism, combine them with a generic subsampling amplification lemma, and then apply generic composition within the chosen privacy framework, as in~\citep{kairouz2021distributed,agarwal2021skellam}.
This contrasts with our approach, where the discretized mechanism inherits Gaussian privacy guarantees by post-processing and can therefore reuse Gaussian-based analyses directly.

\section{Background}\label{sec:background}

In this section, we first introduce the notions of sensitivity, randomness complexity, and privacy that will be used throughout the paper.
Second, we present the properties of the discrete Gaussian mechanism with respect to these notions.

{\bf Sensitivity.}
Consider datasets $X$ that are finite subsets of some set~\(\mathbb{X}\).
We work in the add/remove adjacency notion: two datasets \(X\sim X'\) are neighbors if one can be obtained from the other by inserting or deleting a single data point.
For a dataset of \(n\) elements we write its elements as \(X=\{x^{(1)},\ldots,x^{(n)}\}\subseteq\mathbb{X}\).
For \(p\geq 1\), the global \(\ell_p\)-sensitivity of a function \(f:\mathbb{X}^*\rightarrow\mathbb{R}^d\) is defined as
\[
\Delta_p(f)\ :=\ \sup_{X\sim X'} \|f(X)-f(X')\|_p.
\]

{\bf Randomness complexity.}
The \emph{randomness complexity} of \(\M\) is the expected number of random bits required to produce an output of the mechanism.
As noted by Canonne, Su, and Vadhan~\cite{canonne2025randomness}, this quantity is tightly linked, up to an additive constant, to the Shannon entropy of the mechanism.
When running a sequence of mechanisms independently (keeping public randomness fixed), the total private randomness complexity can be bounded by the sum of entropies plus an additive constant.

{\bf Differential privacy.}
We can release an estimate of $f(X)$ while satisfying \emph{differential privacy} by adding noise to $f(X)$ scaled according to the sensitivity.
Intuitively, differential privacy means that the information released has roughly the same distribution for neighboring datasets $X$ and $X'$.
For example, the Laplace mechanism (for \(p=1\)) attains \((\varepsilon,0)\)-differential privacy and the Gaussian mechanism (for \(p=2\)) attains \((\varepsilon,\delta)\)-differential privacy.
We refer to Appendix~\ref{app:dp-basics} for definitions of differential privacy and properties of basic noise addition mechanisms.

{\bf Discrete Gaussian mechanism.}
Traditionally, differential privacy has been analyzed under the idealized assumption that mechanisms have access to samples from continuous distributions.
Under this viewpoint, the amount of randomness required by a mechanism is typically not accounted for explicitly, and no finite upper bound on the number of random bits is established.
In practice, however, implementations must operate on finite-precision computers and therefore rely on discrete distributions or finite procedures for sampling.
This issue has been studied, for example, by Balcer and Vadhan~\cite{vadhan2018finitecomputers} and by Canonne, Kamath, and Steinke~\cite{canonne2020discretegaussian}, who show that discrete analogues of standard private mechanisms admit finite bounds on the required randomness.
A canonical example is the \emph{discrete Gaussian mechanism}~\cite{canonne2020discretegaussian}, which replaces continuous Gaussian noise with noise drawn from a discrete distribution over the integers with probabilities derived from the density of a Gaussian.
Besides being implementable on finite computers, this mechanism provides a natural starting point for studying the randomness complexity of differentially private algorithms.
\begin{definition}[Discrete Gaussian mechanism]
\label{def:discrete_gaussian}
Let $f : \mathcal{X}^n \to \mathbb{Z}^d$ be a function with sensitivity $\Delta_2(f)$.
For $\sigma_{} > 0$, the
discrete Gaussian mechanism with scale $\sigma \Delta_2(f)$ releases
\[
M(x) = f(x) + Z,
\]
where $Z = (Z_1,\dots,Z_d) \in \mathbb{Z}^d$ has independent coordinates, and
each $Z_j$ is sampled from the one-dimensional discrete Gaussian distribution
centered at $0$ with scale $\sigma \Delta_2(f)$, that is,
\[
\mathbb{P}(Z_j = z)
=
\frac{\exp\!\left(-\frac{z^2}{2\sigma^2 \Delta_2(f)^2}\right)}
{\sum_{u \in \mathbb{Z}} \exp\!\left(-\frac{u^2}{2\sigma^2 \Delta_2(f)^2}\right)}
\qquad \text{for all } z \in \mathbb{Z}.
\]
\end{definition}

{\bf Discretization.}
Note that the discrete Gaussian mechanism is defined only for integer-valued functions.
To apply it to general real-valued functions, one must first discretize the values of $f$.
Let $f(x) \in \mathbb{R}^d$ have sensitivity $\Delta_2(f)$, and let $h > 0$ be a discretization parameter.
If each coordinate is rounded to the nearest integer multiple of~$h$, then the sensitivity of the discretized function \(\tilde f\) satisfies
\begin{equation}
\label{eq:sensitivity_rounding}
    \Delta_2(\tilde f) \le \Delta_2(f) + \frac{h\sqrt{d}}{2}.
\end{equation}
However, such deterministic rounding introduces bias.
\emph{Randomizing} the decision whether to round up or down removes this bias, but increases the sensitivity to $\Delta_2(f) + h\sqrt{d}$.
A better trade-off can be obtained by allowing a resampling probability $\beta \in (0,1)$, resulting in a small bias, as described by Kairouz, Liu, and Steinke~\cite{kairouz2021distributed}.
They use a random rotation technique to ensure that, with high probability, the sensitivity of the rounded vector is bounded by
\begin{equation}
    \tilde{\Delta}_2(f)
    \le
    \min\left\{
        \Delta_2(f) + h\sqrt{d},\,
        \sqrt{\vphantom{\frac{\sqrt{d}h^2}{2}}
            \Delta_2^2(f)
            + \frac{dh^2}{4}
            + \sqrt{2\ln(1/\beta)}
            \left(
                \Delta_2(f)h + \frac{\sqrt{d}h^2}{2}
            \right)
        }
    \right\}.
\end{equation}
To guarantee privacy, the rotation must be repeated until the required condition on the $\ell_2$ norm is satisfied.

{\bf Randomness complexity of the discrete Gaussian.}
The choice of discretization parameter $h$ creates a tension for the discrete Gaussian mechanism because the noise must be generated on the lattice $h\mathbb{Z}^d$.
Equivalently, one may first scale the function by $1/h$, round and apply the discrete Gaussian mechanism on $\mathbb{Z}^d$, and then rescale the output by multiplying with $h$.
If $\sigma > 0$ is the noise level required by the Gaussian mechanism for differential privacy, then the corresponding discrete mechanism uses scale $\tau = \sigma \Delta_2(\tilde f)/h$, before the resulting noise is ultimately rescaled by $h$.
The variance of the resulting mechanism is close to that of the continuous Gaussian mechanism provided that $h \ll \Delta_2(f)/\sqrt{d}$.
However, even in the regime where the variance increase is small, the entropy increases significantly as $h$ approaches $0$.
In particular, the binary entropy of $Z_j$ sampled from the one-dimensional discrete Gaussian distribution with scale $\tau$ is
\begin{equation}
H(Z_j)=\frac{1}{2}\log_2(2\pi e\,\tau^2)+o_\tau(1),
\end{equation}
see Lemma~6 of \cite{ling2014achieving}.
Since the coordinates of $Z=(Z_1,\dots,Z_d)$ are independent, it follows that the binary entropy of the $d$-dimensional discrete Gaussian is
\begin{equation}
H(Z)=\sum_{j=1}^d H(Z_j)=\frac{d}{2}\log_2(2\pi e\,\tau^2)+o_\tau(d).
\end{equation}
With $\tau = \sigma \Delta_2(\tilde f)/h$ this is approximately
\(
d\log_2\!\left(\frac{\sigma \Delta_2(\tilde f)}{h}\right)
+ 2d
\)
for large enough \(d\).
We note that this bound is not tight for the case where $h$ is large compared to $\sigma \Delta_2(\tilde f)$, where in fact the entropy can be $o(d)$.

\section{Dithered Gaussian Mechanism}

We now present an alternative to the discrete Gaussian that avoids the need to discretize the input.
Unlike the methods described in Section~\ref{sec:background} it does not incur any increase in sensitivity.
Moreover, our method uses provably fewer random bits, making it practical even when noise must be generated using slow cryptographically secure randomness.
In addition, the method directly inherits the privacy guarantees of the (continuous) Gaussian mechanism, and therefore does not require separate proofs for composition or amplification by subsampling.

{\bf Conceptual description.}
Our method, the \emph{dithered Gaussian mechanism}, is illustrated in Figure~\ref{fig:dithered-rounding-geometry}.
It can be thought of as a post-processing of the Gaussian mechanism with noise scale \(\sigma\), i.e., the mechanism that outputs \(f(X)+y\), where \(y\sim\mathcal{N}(0,\sigma^2 I_d)\).
The post-processing simply rounds each coordinate to an axis-aligned grid of points with distance \(\xi>0\) between consecutive points along all axes.
Performing this rounding deterministically would introduce bias, but we avoid this with a random shift of the grid: 
Sample public randomness \((a,b)\sim \mathrm{Uniform}([0,1)^2)\) and define the coordinate-dependent dither 
 by \(\gamma_i=(ai+b)\bmod 1\) for each \(i\in[d]\).
Given \(f(X)\in\mathbb{R}^d\) and \(y\sim\mathcal{N}(0,\sigma^2 I_d)\), the mechanism outputs \(\M(f(X))\) where, for \(i=1,\dots,d\),
\begin{equation}\label{eq:mechanism}
\M(f(X))_i
=
\xi\left(
\left\lfloor
\frac{f(X)_i+y_i}{\xi}-\gamma_i+\frac12
\right\rfloor
+\gamma_i
\right).
\end{equation}

\begin{figure}[t]
    \centering
    \begin{tikzpicture}[
        x=0.9cm,
        y=0.9cm,
        dot/.style={circle, inner sep=0pt, minimum size=3pt},
        every node/.style={font=\large}
    ]
        \definecolor{ditherolive}{RGB}{45,47,0}
        \definecolor{mechanismorange}{RGB}{210,82,24}
        \definecolor{noisygreen}{RGB}{0,140,35}

        \foreach \x in {1,2,3,4,5} {%
            \foreach \y in {1,2,3,5} {%
                \node[dot, fill=black] at (\x,\y) {};
            }
            \node[dot, fill=ditherolive] at (\x,4) {};
        }

        \draw[
            decorate,
            decoration={brace, amplitude=6pt},
            line width=0.8pt
        ] (0.8,4) -- (0.8,5)
        node[midway, left=8pt] {$\xi$};

        \node[dot, fill=noisygreen, minimum size=4pt] (noisy) at (2.75,2.45) {};
        \node[dot, fill=mechanismorange, minimum size=4pt] (rounded) at (3,2) {};
        \draw[->, line width=0.8pt] (noisy) -- (rounded);

        \node[
            noisygreen,
            font=\small,
            anchor=west
        ] at (1.0,2.5) {$f(X)+y$};
        \node[
            mechanismorange,
            font=\small,
            anchor=west
        ] at (2.2,1.65) {$\mathcal{M}(f(X))$};
        \node[anchor=west] at (3.0,4) {$\xi\gamma$};
    \end{tikzpicture}
    \caption{From a privacy perspective, the dithered Gaussian mechanism can be thought of as mapping the Gaussian mechanism output \(f(X)+y\) to the nearest point on a randomly shifted axis-aligned grid \(\{\xi (\gamma + z)\; | \; z\in\mathbb{Z}^d\}\). From an implementation perspective, we directly sample from a discrete distribution over the grid.}
    \label{fig:dithered-rounding-geometry}
\end{figure}

{\bf Direct sampling from a discrete distribution.}
To avoid generating the Gaussian vector \(y\) we observe that it is possible to sample the integer-valued random variable
\[
Z_i := \left\lfloor \frac{f(X)_i+y_i}{\xi}-\gamma_i+\frac12 \right\rfloor
\]
directly without first sampling \(y_i\).
Indeed, for each \(k\in\mathbb Z\), the event \(Z_i=k\) is equivalent to
\[
k \le \frac{f(X)_i+y_i}{\xi}-\gamma_i+\tfrac12 < k+1  \Leftrightarrow  \frac{\xi\left(k+\gamma_i - \frac12\right)-f(X)_i}{\sigma} \leq \frac{y_i}{\sigma} < \frac{\xi\left(k+\gamma_i + \frac12\right)-f(X)_i}{\sigma}.
\]
Denoting the cumulative distribution function of \(\frac{y_i}{\sigma}\sim\mathcal{N}(0,1)\) by \(\Phi\) we thus have:
\begin{equation}
\label{eq:Z_k}
\Pr[Z_i=k]
=
\Phi\!\left(
\frac{\xi\left(k+\gamma_i+\frac12\right)-f(X)_i}{\sigma}
\right)
-
\Phi\!\left(
\frac{\xi\left(k+\gamma_i-\frac12\right)-f(X)_i}{\sigma}
\right),
\qquad k\in\mathbb Z.
\end{equation}
Since \(\mathcal M(f(X))_i=\xi(Z_i+\gamma_i)\), this shows that to compute \(\mathcal M(f(X))_i\) it suffices to sample directly from a discrete distribution over \(\mathbb Z\).

\begin{algorithm}[t]
\caption{Dithered Gaussian mechanism}
\label{alg:dithered-gaussian}
\begin{algorithmic}[1]
\Require Function value $f(X) \in \mathbb{R}^d$, parameters $\xi > 0$, $\sigma > 0$
\State Sample public randomness $(a,b) \sim \mathrm{Uniform}([0,1)^2)$
\For{$i = 1, \dots, d$}
    \State $\gamma_i \gets (a \cdot i + b) \bmod 1$
    \State Sample an integer $Z_i \in \mathbb{Z}$ such that, for every $k \in \mathbb{Z}$,
    \[
    \Pr[Z_i = k]
    =
    \Phi\!\left(
    \frac{\xi\left(k+\gamma_i+\frac12\right)-f(X)_i}{\sigma}
    \right)
    -
    \Phi\!\left(
    \frac{\xi\left(k+\gamma_i-\frac12\right)-f(X)_i}{\sigma}
    \right)
    \]
    \State $\mathcal{M}(f(X))_i \gets \xi (Z_i + \gamma_i)$
\EndFor
\State \Return $\mathcal{M}(f(X))$
\end{algorithmic}
\end{algorithm}

{\bf Algorithm details.}
We summarize the algorithm implementation in Algorithm~\ref{alg:dithered-gaussian}.
This abstract description uses real-valued variables and functions to determine sampling probabilities for the random variable~\(Z_i\).
In theory it is possible to generate samples from the distribution of \(Z_i\) exactly even on a finite computer using a standard technique:
Compute values only to the precision needed for a sample.
For example, one may first propose a finite range of candidate integer values \(k\in\mathbb Z\) together with a target precision for the computation of the probabilities \(\Pr[Z_i=k]\).
One then samples a uniform random variable from \([0,1)\) and uses the computed probability masses to locate the corresponding outcome.
If the sampled value falls within the unresolved numerical precision margin, or within the probability mass of the truncated tail, the range of candidate values and the precision of the probability computation can be refined, for example by doubling both, and the procedure repeated.
While exact sampling is thus possible in principle, such a procedure appears difficult to implement efficiently in practice, especially in the setting of model training on GPUs within existing machine learning frameworks.
For this reason, we instead use an approximate, randomness-efficient sampling procedure based on high-probability interval truncation and arithmetic coding.
Details can be found in Appendix~\ref{app:sampling}.
This approach is vectorizable and can be efficiently implemented in PyTorch.

{\bf Noise distribution.}
We characterize the distribution of \(\M(X)\) over the random choice of $\gamma$ and $Z$:

\begin{restatable}{lemma}{ditheredGaussianRoundingLemma}
\label{lem:full_noise_dithered_gaussian}
    Over the randomness of the shift $\gamma_i$ and index $Z_i$, $\mathcal{M}(f(X))_i$ is identically distributed to
\begin{equation}
     f(X)_i + y_i + u_i,
\end{equation}
where $u_i \sim \mathrm{Uniform}(-\xi/2,\xi/2)$ and \(y_i\sim\mathcal{N}(0,\sigma^2)\) are independent.
\end{restatable}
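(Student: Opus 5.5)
The plan is to use the post-processing representation in~\eqref{eq:mechanism} rather than the direct-sampling formula~\eqref{eq:Z_k}. By the derivation preceding Algorithm~\ref{alg:dithered-gaussian}, sampling $Z_i$ from~\eqref{eq:Z_k} given $\gamma_i$ yields the same joint law of $(\gamma_i, Z_i)$ as computing $Z_i=\lfloor (f(X)_i+y_i)/\xi-\gamma_i+\tfrac12\rfloor$ from a fresh $y_i\sim\mathcal N(0,\sigma^2)$ independent of $\gamma_i$. So it suffices to analyze $\M(f(X))_i$ as a deterministic function of the independent pair $(y_i,\gamma_i)$.

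\textbf{Marginal law of the dither.} First I check that $\gamma_i=(ai+b)\bmod 1$ is $\mathrm{Uniform}[0,1)$ for each fixed $i$. Conditional on $a$, the map $b\mapsto (ai+b)\bmod 1$ is a measure-preserving translation of the circle $[0,1)$, so $\gamma_i$ is uniform given $a$, hence unconditionally. Since $(a,b)$ is independent of $y$, $\gamma_i$ is independent of $y_i$. Only the marginal for a single coordinate is needed, since the lemma is coordinatewise.

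\textbf{Rounding error is uniform and independent of the input.} Write $w=(f(X)_i+y_i)/\xi$ and set $u_i=\M(f(X))_i-(f(X)_i+y_i)=\xi\bigl(\lfloor w-\gamma_i+\tfrac12\rfloor-(w-\gamma_i+\tfrac12)+\tfrac12\bigr)$. Thus $u_i=\xi(\tfrac12-\{w-\gamma_i+\tfrac12\})$, where $\{\cdot\}$ is the fractional part. Conditional on $y_i$ (so on $w$), $w+\tfrac12-\gamma_i \bmod 1$ is uniform on $[0,1)$ because $\gamma_i$ is uniform and independent of $w$, by the same circle-translation argument. Hence $\tfrac12-\{\cdot\}$ is uniform on $(-\tfrac12,\tfrac12]$, and $u_i\mid y_i\sim\mathrm{Uniform}(-\xi/2,\xi/2)$, up to a null boundary set. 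Since this conditional law does not depend on $y_i$, $u_i$ is independent of $y_i$. Therefore $\M(f(X))_i=f(X)_i+y_i+u_i$ with $y_i,u_i$ independent and of the stated laws.

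\textbf{Main obstacle.} The only delicate step is the first reduction: justifying that direct sampling via~\eqref{eq:Z_k} has the same joint law as the post-processing construction. This requires stating that the conditional law of $Z_i$ given $\gamma_i$ (and $a,b$) is exactly~\eqref{eq:Z_k}, which the interval equivalence preceding~\eqref{eq:Z_k} gives. Once that is in place, the rest is the standard subtractive-dither argument, namely uniformity of a fractional part under an independent uniform shift.
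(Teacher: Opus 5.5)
Your proposal is correct and follows essentially the same route as the paper: write the rounding error as $\xi\bigl(\tfrac12-\{t_i-\gamma_i+\tfrac12\}\bigr)$, then observe that a uniform circle shift independent of $t_i$ makes this fractional part uniform on $[0,1)$ and independent of $t_i$. The differences are cosmetic. You condition on $y_i$ after first showing that $\gamma_i$ is marginally uniform, whereas the paper conditions on $(t_i,a)$ and uses the uniformity of $b$ directly. You also spell out the reduction from direct sampling via \eqref{eq:Z_k} to the post-processing form, which the paper leaves implicit.
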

The lemma is a consequence of the fact that the grid shift $\gamma_i$ is uniformly distributed and independent of the Gaussian noise $y_i$, which causes the perturbation caused by rounding to the nearest grid point to follow a uniform distribution on $(-\xi/2,\xi/2)$. The full proof is provided in Appendix~\ref{app:proofs}.

From a utility perspective, the dithered Gaussian mechanism is therefore equivalent to adding Gaussian noise $\mathcal{N}(0,\sigma^2)$ and uniform noise $\mathrm{Uniform}(-\xi/2,\xi/2)$.
The variance of the resulting noise is
\begin{equation}
    \mathrm{Var}(y_i + u_i) = \sigma^2 + \xi^2/12.
\end{equation}

The density of the resulting noise distribution is
\begin{equation}
    p_{y_i + u_i}(t)
    =
    \frac{1}{\xi}
    \left[
    \Phi\left(\frac{t+\xi/2}{\sigma}\right)
    -
    \Phi\left(\frac{t-\xi/2}{\sigma}\right)
    \right].
\end{equation}

Integrating the density by parts, we obtain an expression for the cumulative distribution function of the noise introduced by the dithered Gaussian mechanism.
Using $\phi$ to denote the standard Gaussian density:
\begin{align}
F_{y_i+u_i}(t)
=
\frac{1}{\xi}
\Bigg[
&\left(t+\frac{\xi}{2}\right)
\Phi\!\left(\frac{t+\xi/2}{\sigma}\right)
+
\sigma \phi\!\left(\frac{t+\xi/2}{\sigma}\right) \notag\\
&-
\left(t-\frac{\xi}{2}\right)
\Phi\!\left(\frac{t-\xi/2}{\sigma}\right)
-
\sigma \phi\!\left(\frac{t-\xi/2}{\sigma}\right)
\Bigg].
\end{align}

We can also quantify directly how close this noise distribution is to the Gaussian distribution used by the standard Gaussian mechanism in terms of the total variation distance.

\begin{restatable}{proposition}{ditheredGaussianTVBound}
\label{prop:dithered_gaussian_tv_bound}
Let \(G_\sigma\sim\mathcal N(0,\sigma^2)\), and let \(U_\xi\sim\mathrm{Uniform}(-\xi/2,\xi/2)\) be independent of \(G_\sigma\).
Then
\begin{equation}
\label{eq:dithered_gaussian_tv_bound}
    \dTV(G_\sigma+U_\xi,G_\sigma)
    \le
    0.0202
    \left(\frac{\xi}{\sigma}\right)^2.
\end{equation}
\end{restatable}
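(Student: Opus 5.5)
The plan is to reduce to $\sigma=1$ by scaling, write the density difference exactly as a smoothing of $\phi''$ against a nonnegative kernel, and then integrate. Total variation distance is invariant under the common rescaling $t\mapsto t/\sigma$, so it suffices to treat $\sigma=1$ and $h:=\xi/\sigma$. Write $\phi$ for the standard normal density. The density of $G_1+U_h$ is $p(t)=\E_u[\phi(t-u)]$ with $u\sim\mathrm{Uniform}(-h/2,h/2)$, and
\[
\dTV(G_1+U_h,G_1)=\tfrac12\int_{\mathbb R}\bigl|\E_u[\phi(t-u)]-\phi(t)\bigr|\,dt .
\]

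The key step is to express $\E_u[\phi(t-u)]-\phi(t)$ exactly, avoiding a truncated Taylor expansion with a remainder that would be awkward for moderate $h$. Taylor's formula with integral remainder gives
\[
\phi(t-u)=\phi(t)-u\phi'(t)+\int_0^u (u-s)\,\phi''(t-s)\,ds
\]
for either sign of $u$. Averaging over $u$, the linear term vanishes by symmetry of $u$. By Fubini,
\[
\E_u[\phi(t-u)]-\phi(t)=\int_{-h/2}^{h/2} K(s)\,\phi''(t-s)\,ds,
\qquad
K(s)=\frac{1}{2h}\Bigl(\frac h2-|s|\Bigr)^2\ \ge 0 .
\]
This is the Peano kernel of the symmetric uniform average. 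I expect the main care to be needed here: getting the sign and normalization of the kernel right, and checking the identity for $u<0$. A quick sanity check is that $\int K=\E[u^2]/2$, so
\[
\int K(s)\,ds=\frac{1}{h}\int_0^{h/2}\Bigl(\frac h2-s\Bigr)^2 ds=\frac{h^2}{24}=\frac{\mathrm{Var}(u)}{2}.
\]
This matches the variance $\xi^2/12$ computed after Lemma~\ref{lem:full_noise_dithered_gaussian}.

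Since $K\ge 0$, the triangle inequality and Fubini give
\[
\int\bigl|p(t)-\phi(t)\bigr|\,dt\le \int K(s)\,ds\cdot\int|\phi''(t)|\,dt=\frac{h^2}{24}\int|t^2-1|\,\phi(t)\,dt .
\]
The last integral is elementary because $t\phi(t)$ is an antiderivative of $(1-t^2)\phi(t)$. On $(-1,1)$ it contributes $2\phi(1)$. On $|t|>1$ it contributes $\phi(1)$ on each side. Hence $\int|\phi''|=4\phi(1)$.

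Combining the two displays yields
\[
\dTV\le \tfrac12\cdot\frac{h^2}{24}\cdot 4\phi(1)=\frac{\phi(1)}{12}\,h^2\approx 0.020164\,h^2\le 0.0202\,(\xi/\sigma)^2 .
\]
This bound holds for all $h$ with no case split. It is also asymptotically tight to leading order, which explains the constant in \eqref{eq:dithered_gaussian_tv_bound}.
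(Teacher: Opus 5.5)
Your proof is correct and follows essentially the same route as the paper: rescale to $\sigma=1$, expand with Taylor's formula with integral remainder (the linear term vanishing because the uniform has mean zero), bound the $L^1$ error by $\tfrac{1}{2}\mathbb{E}[u^2]\,\|\phi''\|_1=\tfrac{h^2}{24}\cdot 4\phi(1)$, and conclude $\dTV\le \phi(1)h^2/12$. The only cosmetic difference is that you first average the remainder into the nonnegative Peano kernel $K$ and then take norms. The paper instead bounds the remainder's $L^1$ norm by $\tfrac{v^2}{2}\|\phi''\|_1$ for each fixed $v$ and then averages over $v$; the two are equivalent.
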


Thus the distributional error introduced by dithering is quadratic in the relative grid width \(\xi/\sigma\).
For a \(d\)-dimensional mechanism with independent coordinates, the corresponding product-distribution distance is at most \(0.0202\,d(\xi/\sigma)^2\) by a standard coupling bound.

Densities for dithered Gaussians for various choices of $\xi$ can be seen in Figure~\ref{fig:dithered-gaussian-distribution}.
For $\xi = \sigma$ we see that the density is nearly indistinguishable from that of the standard Gaussian, while larger values of $\xi$ somewhat flatten the distribution.

\begin{figure}
    \centering
    \includegraphics[width=\linewidth]{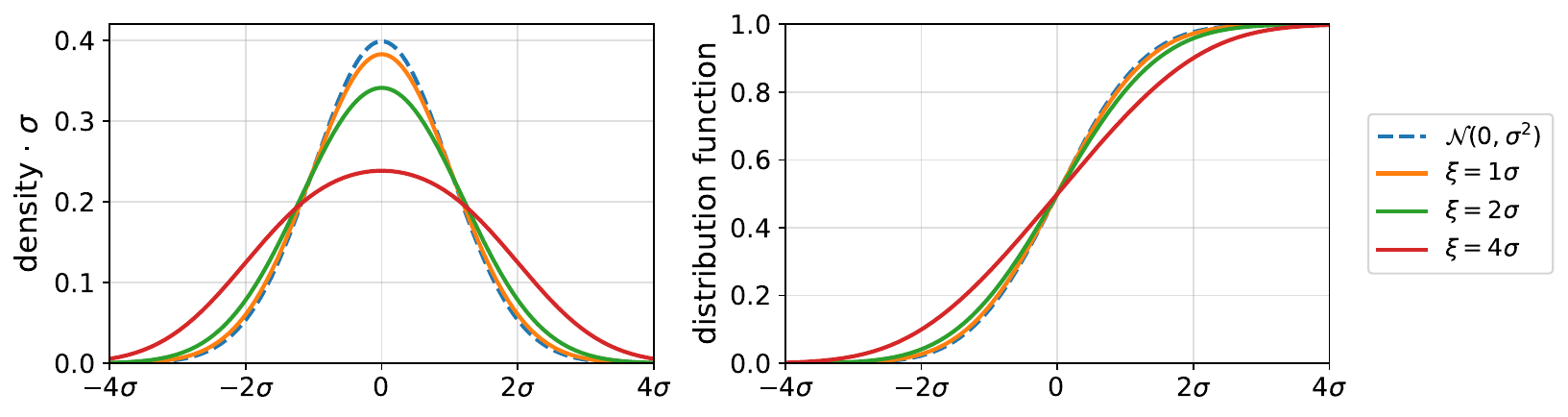}
    \caption{Density and cumulative distribution function of the noise distribution of the dithered Gaussian mechanism for various choices of $\xi$, compared to the Gaussian mechanism with the same privacy guarantee.}
    \label{fig:dithered-gaussian-distribution}
\end{figure}

{\bf Randomness complexity.}
Next, we analyze the randomness complexity of the dithered Gaussian mechanism.
We distinguish between the private (high-quality) randomness used for discrete sampling and the public randomness used to generate the pair $(a, b)$ of uniform random variables defining the grid shift $\gamma$.
The value $\gamma$ may be shared with the adversary without compromising privacy, its purpose is solely to improve utility.

We bound the entropy of the dithered Gaussian mechanism in the following lemma:

\begin{restatable}{lemma}{ditheredGaussianEntropyBound}
\label{lem:dithered_gaussian_entropy_bound}
The worst-case private binary entropy of the dithered Gaussian mechanism is bounded by
\begin{equation}
    H(Z\mid \gamma)
    \le
    \frac{d}{2}\log_2\left(
        2\pi e\left[
            \left(\frac{\sigma}{\xi}+\frac12\right)^2+\frac1{12}
        \right]
    \right).
\end{equation}
\end{restatable}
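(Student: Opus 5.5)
Conditioned on the public dither $\gamma$, the coordinates $Z_1,\dots,Z_d$ are independent, since each $Z_i$ depends only on the independent Gaussian $y_i$. Hence $H(Z\mid\gamma)=\sum_i H(Z_i\mid\gamma_i)$, and it suffices to bound the entropy of a single $Z_i$ by $\frac12\log_2\bigl(2\pi e[(\sigma/\xi+\frac12)^2+\frac1{12}]\bigr)$ uniformly in $\gamma_i$ and in $f(X)_i$. This worst-case uniform bound also covers the average over $\gamma$. The plan is to use the classical bound on the entropy of an integer-valued random variable in terms of its variance: for any $Z$ supported on $\mathbb{Z}$, $H(Z)\le \frac12\log_2\bigl(2\pi e(\mathrm{Var}(Z)+\frac1{12})\bigr)$.

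For the variance bound, fix $\gamma_i$ and write $c = f(X)_i/\xi - \gamma_i + \tfrac12$. Then $Z_i=\lfloor c + y_i/\xi\rfloor$ with $y_i/\xi\sim\mathcal N(0,(\sigma/\xi)^2)$. Write $Z_i = c + y_i/\xi - R$ with $R=\fracpart(c+y_i/\xi)\in[0,1)$. Minkowski's inequality for standard deviations then gives
\[
\sqrt{\mathrm{Var}(Z_i)}\le \sqrt{\mathrm{Var}(y_i/\xi)}+\sqrt{\mathrm{Var}(R)} \le \frac{\sigma}{\xi}+\frac12,
\]
because a variable in an interval of length $1$ has variance at most $1/4$. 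Hence $\mathrm{Var}(Z_i)\le(\sigma/\xi+\frac12)^2$, independently of $c$, and in particular of $\gamma_i$ and of the data. Substituting into the entropy bound and summing over $d$ coordinates yields exactly the claimed expression.

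The step needing justification is the integer entropy bound. It follows from the standard argument: let $W=Z+U$ with $U\sim\mathrm{Uniform}[0,1)$ independent of $Z$. Then $W$ has a density that is piecewise constant with value $\Pr[Z=k]$ on $[k,k+1)$, so the differential entropy of $W$ equals $H(Z)$ in bits. Its variance is $\mathrm{Var}(Z)+\frac1{12}$. The Gaussian maximizes differential entropy at fixed variance, which gives $H(Z)=h(W)\le\frac12\log_2(2\pi e(\mathrm{Var}(Z)+\frac1{12}))$. This is the only nontrivial ingredient; the rest is routine bookkeeping.
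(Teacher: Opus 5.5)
Your proposal is correct and follows the paper's proof: conditional independence of the $Z_i$ given $\gamma$, the integer entropy bound $H(Z)\le\frac12\log_2\bigl(2\pi e(\mathrm{Var}(Z)+\frac1{12})\bigr)$ (obtained by adding an independent uniform and using that the Gaussian maximizes differential entropy), and the variance bound $\mathrm{Var}(Z_i)\le(\sigma/\xi+\frac12)^2$ from Minkowski/Cauchy--Schwarz plus a $1/4$ (Popoviciu) bound on the rounding error. The only differences are cosmetic: you write the rounding error as a fractional part in $[0,1)$ and use a uniform on $[0,1)$, where the paper uses centered versions of both.
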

The full proof is given in  Appendix~\ref{app:proofs}.

As noted above one can choose $\xi \approx \sigma$ and get an error distribution that is close to Gaussian.
In this case, the lemma says that the entropy per coordinate is a small constant.
More precise plots of the entropy are shown in Figure~\ref{fig:entropy_mse_vs_xi_sigma}.
Intuition for why this mechanism has low entropy, in expectation over the randomness of \(\gamma_i\), is that for large enough \(\xi\), the random noise \(y_i\) is unlikely to make \(\M(f(X))_i\) differ from the output with fixed \(y_i = 0\).
Thus each coordinate is close to deterministic in expectation, which implies low entropy.
This argument extends and refines similar observations by Ghentiyala~\cite{ghentiyala2026}.
In theory, using techniques for sampling from low-entropy distributions based on explicit, ordered sampling probabilities~\cite{knuth1976nonuniform,devroye1986nonuniform}, multiple runs of the mechanism can be implemented in time near-linear in \(d\) using an expected number of random bits that matches the sum of all private entropy bounds plus \(O(\log(1/\beta))\) bits with probability \(1-\beta\).
As mentioned above, details of the implementation of our sampling mechanism can be found in Appendix~\ref{app:sampling}.

\begin{figure}
    \centering
    \begin{subfigure}{0.45\linewidth}
        \centering
        \includegraphics[width=\linewidth]{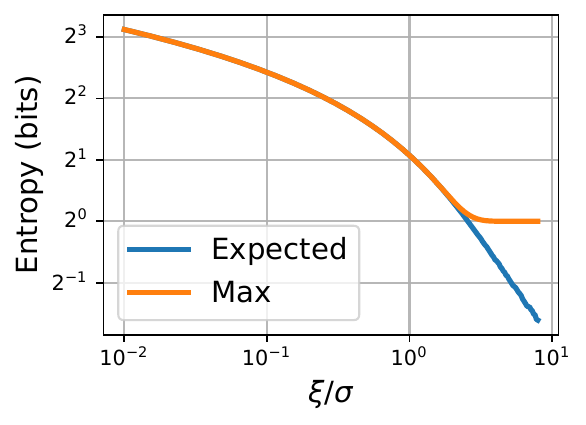}
        \caption{Entropy of the dithered Gaussian distribution.}
        \label{fig:entropy_vs_xi_sigma}
    \end{subfigure}
    \hfill
    \begin{subfigure}{0.45\linewidth}
        \centering
        \includegraphics[width=\linewidth]{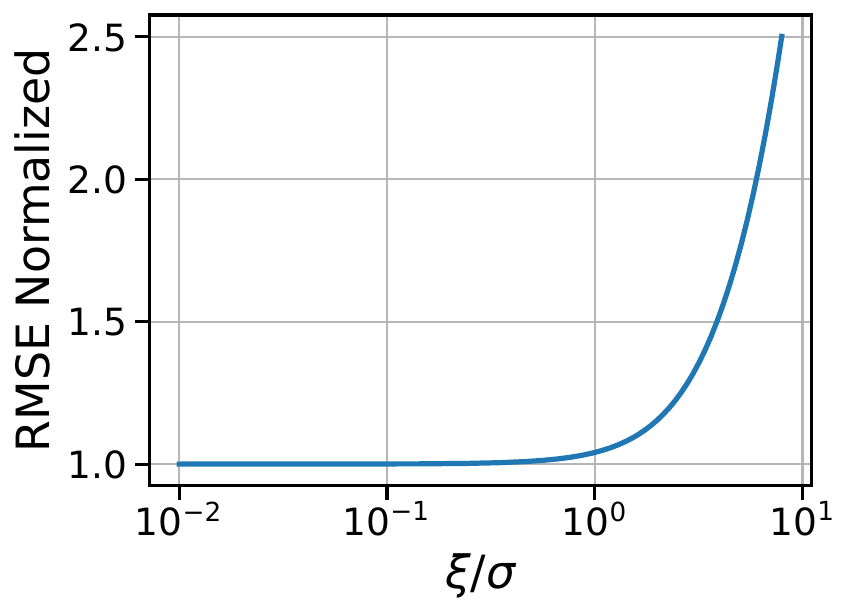}
        \caption{Root mean squared error normalized by $\sigma$.}
        \label{fig:mse_vs_xi_sigma}
    \end{subfigure}

    \caption{The expected and maximum entropy (a) and normalized RMSE of the dithered Gaussian mechanism (b) as functions of $\xi / \sigma$. }
    \label{fig:entropy_mse_vs_xi_sigma}
\end{figure}

\subsection{Comparison to the mechanism of Ghentiyala}

Ghentiyala~\cite{ghentiyala2026} recently proposed an efficient derandomization method for differentially private summation that, like our approach, relies on adding noise followed by a random grid shift and rounding.
Both constructions exploit the fact that, after a suitable random shift, we expect only a small subset of coordinates to be ``ambiguous'' in the sense that their rounded values are likely to depend on the value of the added noise vector \(y\).
While our analysis relies on entropy, Ghentiyala's construction uses discrete Laplace or Gaussian noise and achieves efficiency by \emph{selectively sampling noise} only on coordinates whose values might change after rounding.
This results in multiplicative polylogarithmic factors in the expected randomness complexity, making it higher than that of~\cite{canonne2025randomness} and ours.
Unlike our pairwise independent offset vector \(\gamma\), Ghentiyala uses the same random offset for every coordinate.
A consequence of this is that even though the expected randomness complexity is similar to that of~\cite{canonne2025randomness}, the variance in the number of random bits can be much higher than in the construction of~\cite{canonne2025randomness} as well as ours.
Like our construction, Ghentiyala's is explicit and computationally efficient.
Though no attempt is made in~\cite{ghentiyala2026} to distinguish between public and private randomness complexity, it seems that the randomness used for the shift can be made public without affecting privacy.

\paragraph{Lower Bound.}
In Appendix~\ref{app:lower_bound}, we prove the following lower bound on the private entropy of any differentially private mechanism for releasing a sensitivity-$C$ statistic.
To make the binary entropy well-defined we consider the special case where the statistic is integer-valued.
Inputs \(z\) and \(z'\) are considered neighboring if \(|z-z'|\le C\) where \(C\in\mathbb{N}\).
Let \(U\) denote the public randomness, and write \(\M_U\) for the mechanism obtained after conditioning on \(U\).
\begin{restatable}{theorem}{mainlower}\label{thm:main-lower}
There are absolute constants \(c,c_0>0\) such that for \(\varepsilon\in(0,1)\) and \(\delta\le c_0\varepsilon\) the following holds.
Suppose $\M_U: \ZZ \to \ZZ$ is an $(\varepsilon,\delta)$-differentially private mechanism under sensitivity \(C\), for every fixed value of public randomness \(U\).
If, for some \(\alpha\in\mathbb{N}\) and \(\beta\in(0,\tfrac{1}{10})\), for every \(z\in\ZZ\) it holds that \(\Pr_{U,R}\!\left[|\M_U(z)-z|>\alpha\right]\le \beta\) (where \(R\) is the private randomness) then there exists \(z\in \ZZ\) such that
\[
 \E_U\!\big[\,H(\M_U(z))\,\big] \ge c\,\frac{C}{\varepsilon\,\alpha}.
\]
\end{restatable}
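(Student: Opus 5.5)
The plan is to reduce to a one-dimensional counting argument along a path of neighbouring inputs. Fix the public randomness $U$ and consider the integer inputs $z_j = jC$ for $j = 0,1,\dots,N$, with $N \approx 4\alpha/C$. These span an interval of length about $4\alpha$. The goal is to show that, for most $U$, a fraction of at least $c\min\{1,\,C/(\varepsilon\alpha)\}$ of the $z_j$ satisfy $H(\M_U(z_j)) \ge c'$.

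Averaging over $j$ and then over $U$ gives $\frac1{N+1}\sum_j \E_U[H(\M_U(z_j))] \ge c\,C/(\varepsilon\alpha)$. So some $z_j$ attains the bound. When $\alpha < C/\varepsilon$ the bound is only a constant, and the same argument gives constant entropy.

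\textbf{Handling the public randomness.} The accuracy hypothesis is only on average over $U$, so first apply Markov's inequality. Averaging $\Pr_{U,R}[|\M_U(z)-z|>\alpha]\le\beta$ over $z$ in $\{z_0, z_N\}$ shows that, with probability at least $1/2$ over $U$, both endpoints are accurate with probability at least $1-4\beta > 1/2$. Call such $U$ good.

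For a good $U$, the output $\M_U(z_0)$ puts more than half its mass in $[z_0-\alpha, z_0+\alpha]$. Likewise $\M_U(z_N)$ puts more than half its mass in $[z_N-\alpha, z_N+\alpha]$, and the two windows are disjoint.

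\textbf{Modes and entropy.} For each $j$, let $p_j=\max_v \Pr[\M_U(z_j)=v]$ and let $g(j)$ be a maximiser. I will use the elementary bound $H \ge -\log_2 p_{\max} \ge 1-p_{\max}$. Hence any $j$ with $p_j \le 1-\eta$, for a fixed constant $\eta$ such as $1/(4e)$, already has entropy at least $\eta$.

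Now suppose every $j$ had $p_j > 1-\eta$. Then each mode carries most of the mass. At the endpoints this forces $g(0)$ into the first window and $g(N)$ into the second, so $g(0)\neq g(N)$. Hence there is a change point $j^*$ with $g(j^*) = v \ne w = g(j^*+1)$.

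\textbf{Spreading a change point over $1/\varepsilon$ inputs.} This is the key step, and the one that needs care. Take $k=\lfloor 1/\varepsilon\rfloor$ and use group privacy along the path. For $0\le i\le k$ it gives
\[
\Pr[\M_U(z_{j^*+i})=v]\ \ge\ e^{-i\varepsilon}\Pr[\M_U(z_{j^*})=v] - \sum_{t<i} e^{-t\varepsilon}\delta\ \ge\ e^{-1}(1-\eta) - \delta/\varepsilon ,
\]
which is at least $1/(2e)$ once $\delta\le c_0\varepsilon$ with $c_0$ small. The symmetric statement holds on the other side, with $w$ in place of $v$.

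Consider each of the $\min\{k, N\}$ points on either side of the change point. Such a point puts mass at least $1/(2e)$ on one of $v,w$ and also has a different large mode, or else it is itself a low-mode point. I still need to check this carefully, by choosing the first change point and handling points whose mode is neither $v$ nor $w$. Either way, its non-mode mass is at least $\min\{\eta, 1/(2e)\}$. Its entropy is therefore a constant.

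So at least $\min\{1/\varepsilon, N\}$ of the $N+1 \approx 4\alpha/C$ path points have constant entropy. That is a fraction of order $\min\{1, C/(\varepsilon\alpha)\}$, as required.

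\textbf{Main obstacle.} I expect the main difficulty to be making the mode-change argument rigorous when modes change several times, or when intermediate points have no dominant value. The plan is to treat every point with $p_j\le 1-\eta$ as already counted. Among the points with a dominant mode, I will apply the spreading argument to each maximal run between consecutive changes, and check that the $1/\varepsilon$-neighbourhoods yield enough distinct high-entropy points without double counting that breaks the bound.
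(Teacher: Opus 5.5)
\textbf{There is a genuine gap, and it sits at the heart of the argument.} Your setup is sound: the Markov step fixing a good $U$, the path $z_j=jC$, the bound $H\ge 1-p_{\max}$, and the group-privacy estimate are all fine. The problem is the step meant to produce $\Omega(1/\varepsilon)$ high-entropy inputs.

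You argue under the hypothesis that every $p_j>1-\eta$. Your own estimate refutes that hypothesis immediately: $z_{j^*+1}$ would have dominant mode $w$ while receiving mass at least $1/(2e)>\eta$ on $v$. So what you have actually shown is that \emph{one} path point has entropy at least $\eta$. That is a fraction of about $C/(4\alpha)$, which misses the target by the factor $1/\varepsilon$.

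Once the hypothesis is dropped, counting ``the $\min\{k,N\}$ points on either side of the change point'' is not justified. Suppose $z_{j^*+1}$ is itself high-entropy. A point to the right of $j^*$ still receives mass $\ge 1/(2e)$ on $v$, but it may simply have dominant mode $v$, and nothing forces it to be high-entropy.

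The missing idea is to compare \emph{consecutive dominant-mode points}, skipping the high-entropy ones. Your estimate shows that two points with $p>1-\eta$ within $\Theta(1/\varepsilon)$ steps must share their mode; this is the paper's stability claim. Hence wherever the mode changes between consecutive dominant-mode points, all of the $\Omega(1/\varepsilon)$ points strictly between them are high-entropy. This barrier structure is what turns one mode change into $\Omega(1/\varepsilon)$ entropy-bearing inputs, and the paper's component counting is built on it.

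Two further points need repair.

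\emph{The endpoints.} You get the mode change from the modes of $z_0$ and $z_N$ lying in the disjoint windows $W_0=[z_0-\alpha,z_0+\alpha]$ and $W_N=[z_N-\alpha,z_N+\alpha]$. That presupposes $z_0$ and $z_N$ are dominant-mode points. If they are not, nothing in your sketch excludes, say, $z_0$ and $z_N$ being high-entropy while all interior points share one dominant mode. You need to propagate the window mass by group privacy. A dominant-mode point within $c/\varepsilon$ steps of $z_0$ receives mass $>\eta$ in $W_0$, so its mode lies in $W_0$; the same holds near $z_N$. Then one of three things happens: an end segment of length $\Omega(1/\varepsilon)$ is entirely high-entropy; or the two segments overlap and every path point is high-entropy; or the barrier argument applies between modes lying in disjoint windows. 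The paper avoids this issue differently: it fixes one $U^*$ with small \emph{average} inaccuracy over all of $\{1,\dots,n\}$, and notes that a component longer than $4\alpha$ with a common mode contains many inaccurate points.

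\emph{Small $\alpha$.} Your remark that for $\alpha<C/\varepsilon$ ``the bound is only a constant'' first requires proving $\alpha\ge aC/\varepsilon$. Otherwise, accuracy at two inputs $2\alpha+1$ apart would contradict group privacy, and this is what makes that regime vacuous. Without it, $C/(\varepsilon\alpha)$ is unbounded and constant entropy does not suffice.

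With these repairs your local path argument does go through, and it is a legitimate alternative to the paper's $n\to\infty$ averaging.
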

The proof proceeds by fixing a suitable value of the public randomness and using a graph-theoretic argument based on component stability and barrier structure to derive a lower bound on the achievable accuracy.
Since the Gaussian noise scale needed for privacy is proportional to the clipping radius \(C\) (up to the usual dependence on \(\varepsilon,\delta\)), this lower bound shows that the relevant scale for private entropy is the ratio between the privacy noise scale and the accuracy scale.
This is consistent with Lemma~\ref{lem:dithered_gaussian_entropy_bound}: the private entropy of the dithered Gaussian mechanism depends on the dimensionless ratio \(\sigma/\xi\), where \(\sigma\) is the Gaussian noise scale and \(\xi\) is the discretization width.
In particular, choosing \(\xi=\Theta(\sigma)\) gives constant private entropy per coordinate, while taking a finer grid increases the entropy only logarithmically in \(\sigma/\xi\).

\paragraph{Dithered Laplace mechanism.}

The proposed dithering scheme can also be applied to the Laplace
mechanism.
In Appendix~\ref{sec:dithered_laplace}, we provide an additional set of results specifically for the Laplace mechanism.
In particular, Theorem~\ref{thm:main_dithered_laplace} establishes an upper bound on the randomness complexity, improving on the recent result of Canonne, Su, and Vadhan~\cite{canonne2025randomness}.

\section{Experiments}

\begin{figure}[t]
    \centering
    \begin{subfigure}[t]{0.45\linewidth}
        \centering
        \includegraphics[width=\linewidth]{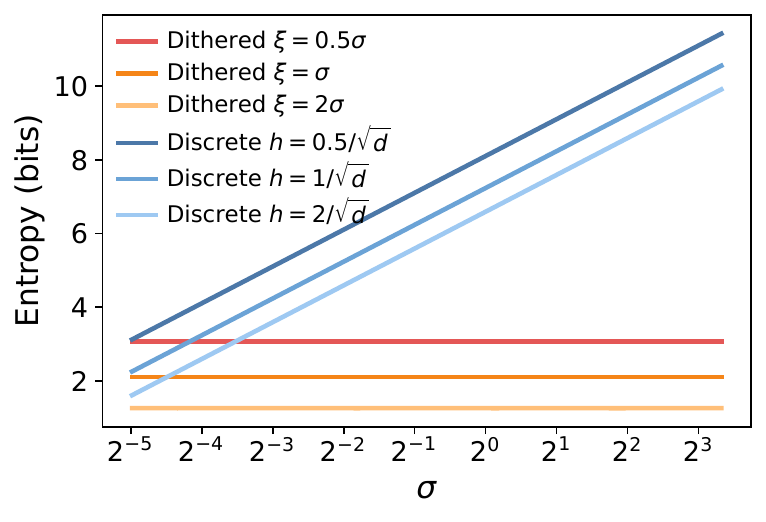}
        \caption{Entropy in bits.}
        \label{fig:entropy_discrete}
    \end{subfigure}
    \hfill
    \begin{subfigure}[t]{0.45\linewidth}
        \centering
        \includegraphics[width=\linewidth]{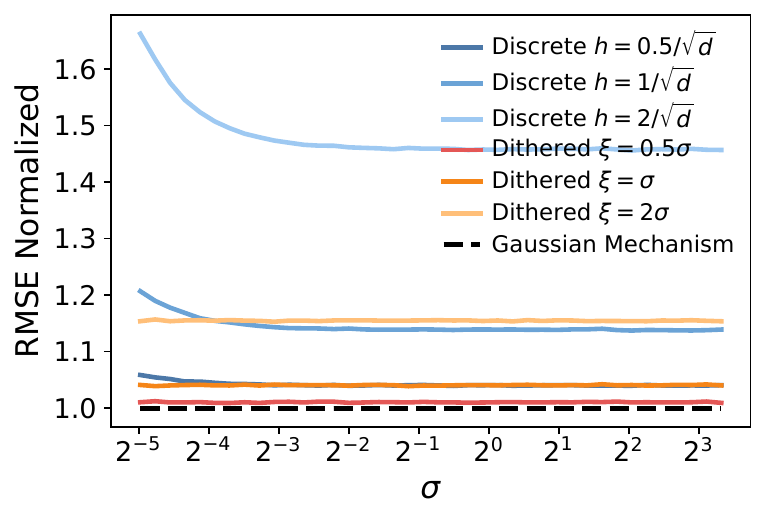}
        \caption{Normalized RMSE.}
        \label{fig:rmse_discrete}
    \end{subfigure}
    \caption{Comparison of the dithered and discrete Gaussian mechanisms as a function of the noise scale $\sigma$.
The left panel reports the entropy per coordinate.
The right panel reports the RMSE on vectors in dimension $d=1000$ sampled uniformly from the unit sphere.
For each value of $\sigma$, we run each mechanism $1000$ times and compute the average reconstruction error.
The RMSE is normalized by the Gaussian mechanism baseline, which is equivalent to dividing by $\sigma$.
The proposed dithered Gaussian mechanism requires a much smaller number of bits, which is independent of $\sigma$.}
    \label{fig:discrete_comparison}
\end{figure}

In this section, we numerically study the utility and randomness complexity of the dithered Gaussian mechanism in comparison with the discrete Gaussian mechanism. 

{\bf Entropy and RMSE.}
Assume that the sensitive vectors are drawn from the unit sphere in dimension $d = 1000$. In the case of the discrete Gaussian mechanism, these vectors are further discretized with discretization levels $h \in \{\frac{0.5}{\sqrt{d}}, \frac{1}{\sqrt{d}}, \frac{2}{\sqrt{d}}\}$.
We set the resampling parameter to $\beta = e^{-1/2}$, as suggested by Kairouz, Liu, and Steinke~\cite{kairouz2021distributed}.
For the dithered Gaussian mechanism, we vary $\xi \in \{0.5\sigma, \sigma, 2\sigma\}$.
In Figure~\ref{fig:discrete_comparison}, we show that, for a sufficiently large noise level $\sigma$, our method has much lower entropy and therefore requires fewer random bits to sample. In terms of utility, measured by the root mean squared error normalized by $\sigma$, the dithered Gaussian mechanism is much closer to the Gaussian mechanism, partly because it does not require discretizing the vectors.

{\bf Model Training\@.}
We next study how the use of the dithered Gaussian mechanism affects the accuracy of DP-SGD\@. 
In Figure~\ref{fig:cifar10_accuracy_time}, we show the test accuracy on the CIFAR-10 dataset for different privacy budgets $\epsilon \in \{\frac{1}{2}, 1, 2, 4, 8\}$ with $\delta = 10^{-5}$, comparing a standard implementation of Opacus~\cite{yousefpour2021opacus} that samples non-secure Gaussian noise with the dithered Gaussian mechanism for different values of $\xi \in \{\frac{\sigma}{2}, \sigma, 2\sigma\}$.
The dithered Gaussian mechanism uses cryptographically secure noise generated by the \textit{secrets} module in the Python standard library~\cite{python_secrets_2026}. We observe that the accuracy remains almost the same for $\xi \in \{\frac{\sigma}{2}, \sigma\}$ and drops for $\xi = 2\sigma$, which is consistent with the findings in Figures~\ref{fig:discrete_comparison} and~\ref{fig:dithered-gaussian-distribution}. 
The cost of using the dithered sampling scheme together with secure noise generation is an increase in training time of about $30\%$ compared with the standard Opacus implementation of DP-SGD, and about $20\%$ compared with Opacus' ``secure mode'', which includes cryptographically secure noise generation. However, in terms of protection against floating-point vulnerabilities and privacy accounting, secure mode should be viewed as a heuristic rather than as providing a formal guarantee.

The overhead is indicative of what one can expect in general, though the figure will depend on how much time is spent on randomness generation, which in turn depends on factors such as batch size. For instance, even when using only cryptographically secure noise, much larger overheads were observed in the work of Egan~\cite{egan2024highspeed}, ranging from $74\%$ to over $400\%$. In that work, a rather small batch size of $32$ is used, making the time spent on noise generation relatively larger. It should therefore be expected that, with a much larger batch size, the overhead would drop.

Some implementation details: We trained a ConvNet model for $10$ epochs with batch size $512$ and clipping norm $1$. 
PRV accounting was performed by Opacus, and the learning rates optimized on the validation set for each point. 
The time was measured on a single A100 GPU with 16 CPU cores; we report the total training time over $10$ epochs with error bars computed over $3$ runs.

\begin{figure}[t]
    \centering

    \begin{subfigure}{0.48\linewidth}
        \centering
        \includegraphics[width=\linewidth]{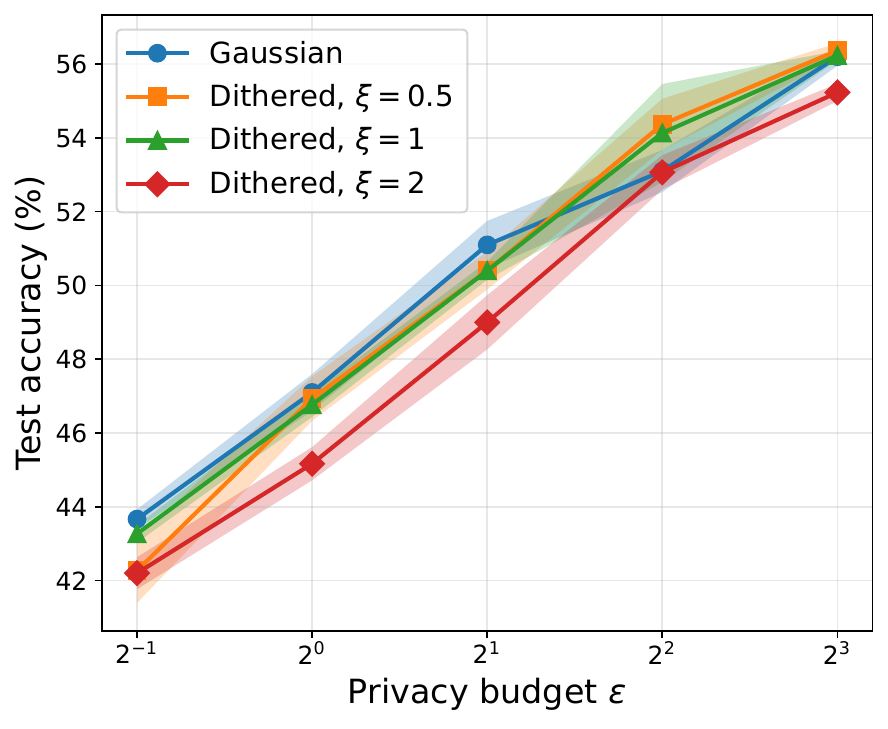}
        \caption{Test accuracy.}
        \label{fig:cifar10_accuracy}
    \end{subfigure}
    \hfill
    \begin{subfigure}{0.48\linewidth}
        \centering
        \includegraphics[width=\linewidth]{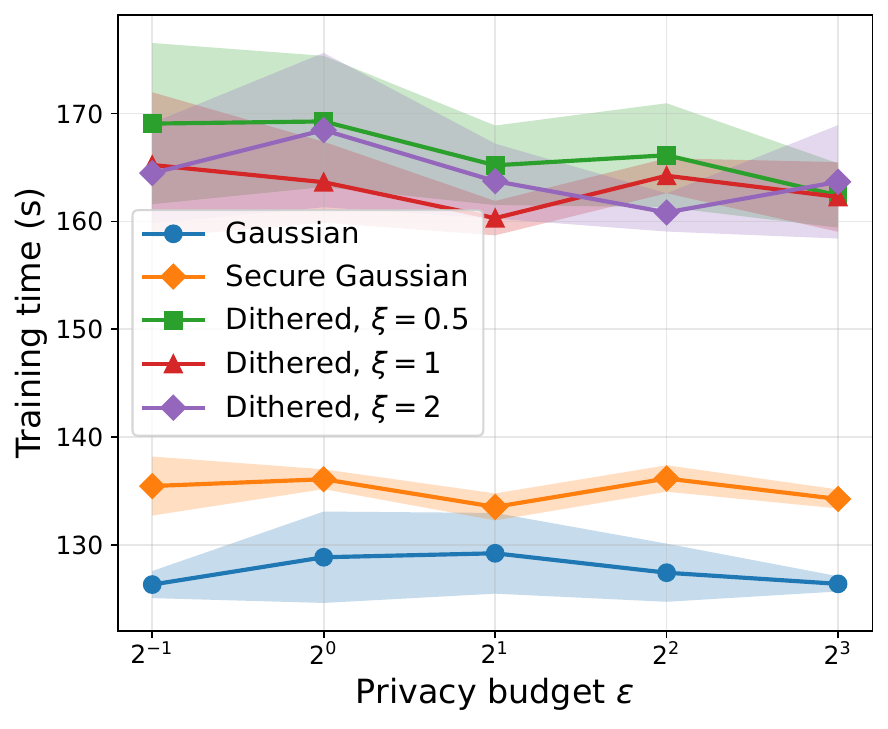}
        \caption{Training time.}
        \label{fig:cifar10_time}
    \end{subfigure}

    \caption{%
        Comparison of Gaussian and dithered Gaussian mechanisms on CIFAR-10 across privacy budgets $\epsilon$ with $\delta =10^{-5}$. Solid lines show the mean over three runs, and shaded regions indicate one standard deviation. Plot~(a) shows that for $\xi = 0.5$ and $\xi = 1$, there is no significant drop in test accuracy, whereas for $\xi = 2$, the accuracy is substantially lower. Plot~(b) shows that using the dithered Gaussian mechanism with cryptographically secure noise generation incurs a runtime increase of only about $30\%$ compared with the standard Opacus implementation of DP-SGD, and about $20\%$ compared with Opacus's secure mode.
    }%
    \label{fig:cifar10_accuracy_time}
\end{figure}

\section{Conclusion and Future Directions}

We have introduced the dithered Gaussian mechanism, an alternative to previous discrete noise mechanisms, which protects against floating-point vulnerabilities without requiring separate privacy accounting.
Instead, it directly inherits the privacy guarantees of the Gaussian mechanism via post-processing. 
We prove that it requires fewer random bits than the discrete Gaussian mechanism and only moderately increases the time needed to sample the noise compared with a naive, non-secure implementation.

A promising direction for future work is to generalize the proposed dithered Gaussian mechanism to the distributed differential privacy setting by integrating it with secure aggregation in federated learning.
The main obstacle is that the dithered Gaussian mechanism is not additive, which prevents its noise from being straightforwardly decomposed across clients.
A separate vulnerability, shared by all of the methods considered here, is exposure to timing attacks.
We also note that the mechanism would likely benefit from dedicated hardware support to directly represent vectors in discretized form, rather than converting to a floating point representation, and to speed up the sampling step.
Conceivably this could make cryptographically secure randomness available in DP-SGD with a truly negligible performance overhead.

\medskip

{\bf Acknowledgements.} Rasmus Pagh is supported by a Data Science Distinguished Investigator grant from Novo Nordisk Fonden, and is part of BARC, supported by the VILLUM Foundation grant~54451.
Nikita Kalinin is supported in part by the Austrian Science Fund (FWF) [10.55776/COE12].

We thank Andreas V.~Welsch Zacchi and
Christoffer H.~Andersen who independently obtained empirical results on DP-SGD supporting our conclusions as part of their BSc thesis at the University of Copenhagen.

\bibliographystyle{plain}
\bibliography{references}

\newpage
\appendix

\section{Differential Privacy Basics}\label{app:dp-basics}

In this appendix we recall the standard definitions of differential privacy, see~\cite{dwork2014algorithmic} for details and a historical overview.
We adopt the standard \emph{unbounded} (add/remove-one) adjacency model: two datasets $X,X'\subseteq\mathbb{X}$ are said to be \emph{neighbors}, written $X\sim X'$, if one can be obtained from the other by either adding or removing a single element.
Formally,
\[
X\sim X' \quad\Longleftrightarrow\quad |\,X\triangle X'\,| = 1,
\]
where $\triangle$ denotes the symmetric difference.
This model corresponds to protecting the participation of a single individual, and is the default unless otherwise stated.
\begin{definition}[Pure and approximate differential privacy]
Let $\varepsilon\ge 0$ and $\delta\in[0,1)$.
A randomized mechanism $\M$ with range $\mathcal{Y}$ is \emph{$(\varepsilon,\delta)$-differentially private} if for all neighboring datasets $X\sim X'$ and all measurable $S\subseteq\mathcal{Y}$,
\[
\Pr[\M(X)\in S]\ \le e^{\varepsilon}\,\Pr[\M(X')\in S]\ +\ \delta.
\]
When $\delta=0$ we say that $\M$ is \emph{$\varepsilon$-differentially private} (\emph{pure} differential privacy).
\end{definition}

\begin{lemma}[Post-processing]
If a mechanism $\M: \mathbb{X}^*\to\mathcal{Y}$ is $(\varepsilon,\delta)$-differentially private and $K$ is any (possibly randomized) mapping $K:\mathcal{Y}\to\mathcal{Z}$ whose internal randomness is independent of the input dataset $X$, then the composed mechanism $K\circ\M$ is also $(\varepsilon,\delta)$-differentially private.
\end{lemma}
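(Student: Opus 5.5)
The statement is the standard post-processing lemma. I will prove it first for deterministic $K$ and then reduce the randomized case to the deterministic one.

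\textbf{Deterministic $K$.} Suppose $K:\mathcal{Y}\to\mathcal{Z}$ is a measurable function. Fix neighbors $X\sim X'$ and a measurable set $T\subseteq\mathcal{Z}$. The preimage $S=K^{-1}(T)$ is a measurable subset of $\mathcal{Y}$, and $\Pr[K(\M(X))\in T]=\Pr[\M(X)\in S]$. Applying the $(\varepsilon,\delta)$ guarantee of $\M$ to the set $S$ gives
\[
\Pr[K(\M(X))\in T]\le e^{\varepsilon}\Pr[\M(X')\in S]+\delta=e^{\varepsilon}\Pr[K(\M(X'))\in T]+\delta.
\]
This settles the deterministic case.

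\textbf{Randomized $K$.} Write $K(y)=g(y,R)$, where $R$ is the internal randomness of $K$, with some law $\mu$ independent of the dataset and of the coins of $\M$, and $g$ is jointly measurable. For each fixed $r$, the map $g(\cdot,r)$ is deterministic, so the previous step gives
\[
\Pr[g(\M(X),r)\in T]\le e^{\varepsilon}\Pr[g(\M(X'),r)\in T]+\delta .
\]
Because $R$ is independent of $\M(X)$, Fubini/Tonelli lets us write
\[
\Pr[K(\M(X))\in T]=\int\Pr[g(\M(X),r)\in T]\,d\mu(r).
\]
Integrating the pointwise inequality against the probability measure $\mu$ preserves it: the factor $e^{\varepsilon}$ passes through the integral, and $\delta$ integrates to $\delta$. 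This yields the claim for randomized $K$.

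The main technical point is the measurability needed for the conditioning argument: $K$ must be modeled as a Markov kernel, or equivalently as a jointly measurable $g$. This is exactly where the hypothesis that the internal randomness of $K$ is independent of $X$ is used. In the paper's application, $K$ is the dithered rounding with public $(a,b)$, and the measurability holds by construction, since floor and mod are Borel maps.
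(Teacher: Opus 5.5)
Your proof is correct and is the standard argument. The paper states this lemma without proof, deferring to Dwork and Roth, and their proof likewise handles deterministic $K$ by pulling events back through $K^{-1}$, then treats a randomized $K$ as a mixture of deterministic maps $g(\cdot,r)$ over coins independent of $X$ and of the coins of $\M$, averaging the pointwise inequality; your explicit care about joint measurability, so that Tonelli applies, is a harmless refinement.
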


\begin{definition}[$\ell_p$-sensitivity]
For a function $f:\mathbb{X}^*\to\mathbb{R}^d$ and $p\in[1,\infty]$, the $\ell_p$-sensitivity is
\[
\Delta_p(f)\ :=\ \sup_{X\sim X'}\ \|f(X)-f(X')\|_p.
\]
\end{definition}

\begin{definition}[Laplace mechanism]\label{def:laplace-mechanism}
  For $\lambda >0$ and a function $f: \mathbb{X}^* \rightarrow \mathbb{R}^d$, the \emph{Laplace mechanism} is defined as
\(
\M(X)\ =\ f(X)\ +\ y\) where
\(y_i\sim \mathrm{Laplace}(\lambda)\) independently for $i=1,\dots,d$.
\end{definition}

We will use the following sufficient conditions for the Laplace mechanism to satisfy pure and approximate differential privacy (see Andersson et al.~\cite[Theorem~3]{andersson2025count} for a proof).
\begin{lemma} (Privacy of the Laplace mechanism).
\label{lem:laplace-privacy}
For $\eps > 0$, if $\lambda\geq \Delta_1(f)/\eps$ the Laplace mechanism satisfies \(\eps\)-differential privacy.
For $\eps \in (0,1)$ and $\delta \in (0,1/2)$, if
\(\lambda\ \ge 2\,\sqrt{\ln(1/\delta)}\,\Delta_2(f) / \varepsilon\)
then the Laplace mechanism satisfies $(\varepsilon,\delta)$-differential privacy.
\end{lemma}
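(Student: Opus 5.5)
For the pure part I would compare output densities directly. For the approximate part I would bound the tail of the privacy loss random variable with Hoeffding's inequality and then use the standard fact that a loss exceeding $\varepsilon$ with probability at most $\delta$ implies $(\varepsilon,\delta)$-differential privacy.

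\textbf{Pure part.} Fix neighbors $X\sim X'$ and set $v=f(X)-f(X')$. The output density at $o\in\mathbb{R}^d$ is $\prod_i \frac{1}{2\lambda}e^{-|o_i-f(X)_i|/\lambda}$. The triangle inequality gives the ratio bound
\[
\frac{p_X(o)}{p_{X'}(o)}\le \exp\!\Big(\sum_i |v_i|/\lambda\Big)\le \exp\big(\Delta_1(f)/\lambda\big)\le e^{\varepsilon}.
\]
Integrating over any measurable $S$ gives $\varepsilon$-DP. This part is routine.

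\textbf{Approximate part: setup.} Write $o=f(X)+y$ with $y_i\sim\mathrm{Laplace}(\lambda)$ i.i.d. The privacy loss is
\[
L(y)=\ln\frac{p_X(o)}{p_{X'}(o)}=\sum_{i=1}^d \frac{|y_i+v_i|-|y_i|}{\lambda}.
\]
This is a sum of independent terms, and the $i$-th term lies in the interval $[-|v_i|/\lambda,\,|v_i|/\lambda]$, of length $2|v_i|/\lambda$.

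\textbf{Mean of the loss.} $\E[L]$ is the KL divergence between the two product Laplace distributions. A one-line computation gives, per coordinate,
\[
\frac{|v_i|}{\lambda}+e^{-|v_i|/\lambda}-1\le \frac{v_i^2}{2\lambda^2},
\]
so $\E[L]\le \Delta_2(f)^2/(2\lambda^2)$.

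\textbf{Deviation of the loss.} Hoeffding's inequality gives
\[
\Pr\big[L-\E[L]>t\big]\le \exp\!\Big(-\frac{2t^2}{\sum_i 4v_i^2/\lambda^2}\Big)\le \exp\!\Big(-\frac{t^2\lambda^2}{2\Delta_2(f)^2}\Big).
\]
Choosing $t=\Delta_2(f)\sqrt{2\ln(1/\delta)}/\lambda$ makes the right-hand side equal to $\delta$.

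\textbf{Plugging in the parameters.} With $\lambda\ge 2\sqrt{\ln(1/\delta)}\,\Delta_2(f)/\varepsilon$ we get $t\le \varepsilon/\sqrt2$. For the mean,
\[
\E[L]\le \frac{\varepsilon^2}{8\ln(1/\delta)}\le \frac{\varepsilon^2}{8\ln 2}<0.19\,\varepsilon,
\]
using $\delta<1/2$ and $\varepsilon<1$. Hence $\E[L]+t<(0.71+0.19)\varepsilon<\varepsilon$, and therefore $\Pr[L>\varepsilon]\le\delta$.

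\textbf{Concluding.} Let $B=\{o: L>\varepsilon\}$. Then
\[
\Pr[\M(X)\in S]\le \Pr[\M(X)\in S\setminus B]+\delta\le e^{\varepsilon}\Pr[\M(X')\in S]+\delta.
\]

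\textbf{Main obstacle.} The only delicate step is the constant bookkeeping. We must make sure that the KL mean term and the Hoeffding deviation together fit under $\varepsilon$ with the stated factor $2$. This is where the hypotheses $\varepsilon<1$ and $\delta<1/2$ are used. If the margin turned out too tight, I would use the sharper per-term range $|v_i|/\lambda$ around a nonzero center, or Bernstein's inequality with variance $\le v_i^2/\lambda^2$, in place of plain Hoeffding.
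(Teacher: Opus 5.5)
Your proof is correct, and the constants check out. Note that the paper does not prove this lemma; it cites Andersson et al.\ (Theorem~3). So your argument gives a self-contained proof where the paper has only a reference.

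The pure part is the usual density-ratio argument. For the approximate part you do three things:
\begin{enumerate}
\item You bound $\E[L]$ by the KL divergence $\sum_i\bigl(|v_i|/\lambda+e^{-|v_i|/\lambda}-1\bigr)\le \|v\|_2^2/(2\lambda^2)$.
\item You apply Hoeffding to the independent summands, each confined to $[-|v_i|/\lambda,|v_i|/\lambda]$.
\item You convert the tail bound $\Pr[L>\varepsilon]\le\delta$ into $(\varepsilon,\delta)$-DP.
\end{enumerate}
This is the standard proof of advanced composition, specialized to the $d$ coordinates viewed as $|v_i|/\lambda$-DP mechanisms for the fixed pair $X\sim X'$.

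A common alternative goes through zero-concentrated DP:
\begin{itemize}
\item Each coordinate is $\tfrac12(v_i/\lambda)^2$-zCDP.
\item R\'enyi divergences add over the product, giving $\rho=\|v\|_2^2/(2\lambda^2)\le\Delta_2(f)^2/(2\lambda^2)$.
\item The conversion from $\rho$-zCDP to $\bigl(\rho+2\sqrt{\rho\ln(1/\delta)},\delta\bigr)$-DP produces exactly your quantity $\E[L]+t\le \varepsilon^2/(8\ln(1/\delta))+\varepsilon/\sqrt2$.
\end{itemize}
The zCDP route is black-box and composes cleanly with other mechanisms. Yours is fully elementary, and it shows exactly where $\varepsilon<1$ and $\delta<1/2$ are used: they absorb the mean term into the $1-1/\sqrt2$ slack.

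Your margin, $0.71+0.19<1$, is comfortable. The fallback to Bernstein or a sharper Hoeffding range is therefore unnecessary.
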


\clearpage

\section{Sampling Algorithm}\label{app:sampling}

\begin{algorithm}[h!]
\caption{Approximate Dithered Gaussian Sampling}
\label{alg:adaptive-dithered-gaussian}
\begin{algorithmic}[1]
\Require $f(X)_i\in\mathbb{R}$ and $\gamma_i\in[0,1)$ for $i\in[d]$, parameters $\xi>0$, $\sigma>0$, truncation level $\delta'\in(0,1)$, block size $t\ge 1$
\Ensure Samples \(Z_1, \dots Z_d\)

\State Let \(z_{\delta'}\) satisfy \(\Pr[|Y|>\sigma z_{\delta'}]\le \delta'\) for \(Y\sim\mathcal N(0,\sigma^2)\)
\For{$i=1,\dots,d$}
    \State $c_i \gets f(X)_i/\xi-\gamma_i+\tfrac12$
    \State \(k_i^{\min} \gets \lfloor c_i-\sigma z_{\delta'}/\xi \rfloor\), \quad
           \(k_i^{\max} \gets \lceil c_i+\sigma z_{\delta'}/\xi \rceil\)
\EndFor
\State \(m \gets \max_{i\in[d]} (k_i^{\max}-k_i^{\min}+1)\) \Comment{Determine the truncated sampling range}

\For{$i=1,\dots,d$}
    \For{$j=0,\dots,m-1$}
        \State \(k_{i,j} \gets k_i^{\min}+j\)
        \If{$k_{i,j}\le k_i^{\max}$}
            \State \(P_{i,j} \gets
            \Phi\!\left(\dfrac{\xi(k_{i,j}+\gamma_i+\tfrac12)-f(X)_i}{\sigma}\right)
            -
            \Phi\!\left(\dfrac{\xi(k_{i,j}+\gamma_i-\tfrac12)-f(X)_i}{\sigma}\right)\)
        \Else
            \State \(P_{i,j} \gets 0\)
        \EndIf
    \EndFor
    \State Normalize: \(P_{i,\cdot}\gets P_{i,\cdot}/\sum_{j=0}^{m-1} P_{i,j}\)
    \State \(C_{i,0}\gets 0\), and \(C_{i,j}\gets \sum_{\ell=0}^{j-1} P_{i,\ell}\) for \(j=1,\dots,m\)
\EndFor

\State Initialize active set \(A\gets[d]\); for each \(i\), no outcome has yet been assigned
\While{$A\neq\emptyset$}
    \State Generate a block \(B_i\in\{0,\dots,2^t-1\}\) of $t$ random bits for each \(i\in A\)
    \For{each \(i\in A\)}
        \State Let \(B_{i,1},\dots,B_{i,r_i}\) be all blocks generated so far for row \(i\)
        \State \(L_i \gets \sum_{s=1}^{r_i} B_{i,s}2^{-ts}\), \quad \(H_i \gets L_i+2^{-tr_i}\)
        \If{there exists \(j\in\{0,\dots,m-1\}\) such that \([L_i,H_i)\subseteq [C_{i,j},C_{i,j+1})\)}
            \State \(Z_i \gets k_{i,j}\)
            \State Remove \(i\) from \(A\)
        \EndIf
    \EndFor
\EndWhile
\State \Return \(Z_1, \dots, Z_d\)
\end{algorithmic}
\end{algorithm}

In this section, we present Algorithm~\ref{alg:adaptive-dithered-gaussian} for approximate sampling from the dithered Gaussian distribution \eqref{eq:Z_k}. We present it in a vectorizable form, allowing for efficient implementation. First, we introduce a small probability $\delta' > 0$ for truncating the sampling range, so that the probability of obtaining a sample outside this range is smaller than $\delta'$. For each coordinate, we compute the range $[k_i^{\min}, k_i^{\max}]$ and pad it so that each range has a fixed length $m$. We then compute the probability distribution for each coordinate, denoted by $P_{i,j}$. For sampling, we invert the cumulative distribution function: namely, we compute the partial sums $C_{i,j} = \sum\limits_{l = 0}^{j - 1} P_{i,l}$. Then, given a sampled uniform random variable, we return the index of the interval $[C_{i,j}, C_{i,j+1})$ that contains its value. Sampling a true uniform random variable requires an infinite number of bits. Therefore, instead, we sample a finite number $r_i \times t$ of bits, structured into $r_i$ blocks of $t$ bits for efficiency, since sampling one extra bit at a time is impractical. This defines a range in which a true uniform random variable would lie if we continued sampling more bits. Once this range is fully contained in one of the intervals $[C_{i,j}, C_{i,j+1})$, we output the corresponding index. We maintain a set of indices $A$ for which the sampled value has not yet been determined; we call this the active set. In practice, we choose the block size so that most values are sampled on the first attempt, and additional bits are sampled only for the remaining, smaller active set.

\clearpage

\section{Lower Bound}\label{app:lower_bound}

Canonne, Su, and Vadhan study the randomness complexity of differentially
private mechanisms for the summation problem, where the input \(X\) is a
dataset of \(n\) vectors in \([0,1]^d\). In this setting, the function has
\(\ell_1\)-sensitivity \(\Delta_1(f)=d\). They prove a lower bound on the
amount of randomness required by accurate differentially private mechanisms
for this problem (\cite[Cor.~4.2]{canonne2025randomness}).
Specifically, for an $(\varepsilon,\delta)$-differentially private mechanism with Hamming adjacency and parameters $\varepsilon\le 1/d$ and $\delta\le 1/(6d^2)$, any mechanism that retains nontrivial accuracy (say $\beta \le 1/d$ and $\alpha \le n/2 - 1$ on inputs of length $n$) has expected randomness complexity
\begin{equation}
  \E[H(\M(X))]\ \ge \log_2 d\ -\ O(1).
\label{eq:csv-lower}
\end{equation}
This holds even under \emph{bounded differential privacy} where the data size $n$ is public information.

In this section, we show an improved lower bound for approximate
differential privacy by separating the public randomness \(U\) from the
private randomness needed to ensure differential privacy. We consider the
following setting. Given a clipping radius \(C\in\mathbb{N}\), the function we consider is the identity function \(f(z)=z\) defined on \(\ZZ\), where integers \(z\) and \(z'\) are neighboring if and only if
\(|z-z'| \leq C\), so that the \(\ell_1\)-sensitivity is \(C\).
A mechanism for \(f\) implies a mechanism for summation of integers in \(\{0,\dots,C\}\).
We upper bound the private entropy of a mechanism \(\M\) defined on \(\ZZ\) by
\[
H(\M)\ :=\ \sup_{z\in \ZZ}\ \E_U\!\big[\,H(\M_U(f(z)))\,\big],
\]
where the expectation is over public randomness $U$ and $H(\M_U(f(z)))$ is the Shannon entropy of the output distribution of $\M$ on input $f(z)$ given $U$.

\mainlower*

The proof proceeds by fixing a suitable value of the public randomness and using a graph-theoretic argument based on component stability and barrier structure to derive a lower bound on the achievable accuracy.

\begin{proof}
Write \(H=H(\M)\).
We also use the standard group-privacy consequence of \((\varepsilon,\delta)\)-differential privacy: if two inputs are connected by \(t\) neighboring steps, then for every event \(E\),
\begin{equation}\label{eq:group-privacy-lower}
\Pr[\M_U(z')\in E]
\ge
e^{-t\varepsilon}\left(\Pr[\M_U(z)\in E]
-\delta\sum_{j=0}^{t-1}e^{j\varepsilon}\right)
\end{equation}
for every fixed value \(U\) of the public randomness.

We first record a simple consequence of accuracy.
For a sufficiently small absolute constant \(a>0\), no mechanism satisfying the hypotheses can have \(\alpha<aC/\varepsilon\).
Indeed, let \(z'=z+2\alpha+1\), so the intervals
\[
I_z=[z-\alpha,z+\alpha],
\qquad
I_{z'}=[z'-\alpha,z'+\alpha]
\]
are disjoint.
Accuracy gives \(\E_U\Pr[\M_U(z)\in I_z]\ge 1-\beta\) and \(\E_U\Pr[\M_U(z')\in I_{z'}]\ge 1-\beta\).
Since \(I_z\cap I_{z'}=\emptyset\), the second inequality implies \(\E_U\Pr[\M_U(z')\in I_z]\le \beta\).
On the other hand, averaging \eqref{eq:group-privacy-lower} over \(U\) and using \(\delta\le c_0\varepsilon\), with \(a\) and \(c_0\) small enough, gives a constant lower bound larger than \(\beta\) on \(\E_U\Pr[\M_U(z')\in I_z]\) whenever \(\alpha<aC/\varepsilon\), a contradiction.
Thus, in the non-vacuous case we may assume
\begin{equation}\label{eq:alpha-at-least-privacy-scale}
\alpha\ge aC/\varepsilon.
\end{equation}

Fix \(n\) and write \(Q_n=\{1,\ldots,n\}\).
Since the mechanism satisfies the stated error guarantee on every input,
\[
\E_U\Pr_{z\sim Q_n}\!\left[|\M_U(z)-z|>\alpha\right]\le \beta.
\]
Also, by the definition of \(H\),
\[
\E_U\E_{z\sim Q_n}\!\left[H(\M_U(z))\right]\le H.
\]
Markov's inequality therefore gives a value \(U^*\) such that
\begin{align}
\Pr_{z\sim Q_n}\!\left[|\M_{U^*}(z)-z|>\alpha\right]&\le 2\beta, \label{eq:fixed-u-accuracy}\\
\E_{z\sim Q_n}\!\left[H(\M_{U^*}(z))\right]&\le 3H. \label{eq:fixed-u-entropy}
\end{align}
Let
\[
S=\{z\in Q_n: H(\M_{U^*}(z))\ge 1/2\}.
\]
By \eqref{eq:fixed-u-entropy}, \(|S|\le 6Hn\).
For every \(z\notin S\), the output distribution of \(\M_{U^*}(z)\) has an atom \(m_z\) of probability at least \(7/8\): otherwise its Shannon entropy would be at least the binary entropy \(h_2(1/8)>1/2\).

Choose a sufficiently small absolute constant \(b>0\), and set
\[
k=\max\{1,\lfloor bC/\varepsilon\rfloor\}.
\]
If \(z,z'\notin S\) and \(|z-z'|\le k\), then \(m_z=m_{z'}\).
To see this, let \(t=\lceil |z-z'|/C\rceil\).
By the choice of \(k\), \(t\varepsilon\le b+\varepsilon\), and the additive term in \eqref{eq:group-privacy-lower} is at most an absolute constant that can be made smaller than \(1/4\) by choosing \(c_0\) small enough.
Applying \eqref{eq:group-privacy-lower} to the event \(\{\M_{U^*}(\cdot)=m_z\}\) gives probability greater than \(1/8\) under input \(z'\).
Since \(z'\notin S\), all atoms other than its atom of mass at least \(7/8\) have probability at most \(1/8\), so \(m_z=m_{z'}\).

Build a graph \(G\) whose vertices are \(Q_n\setminus S\), connecting two vertices when their distance is at most \(k\).
Each connected component is an interval in the line \(Q_n\), and consecutive components are separated by a run of at least \(k\) points of \(S\).
Moreover, by the stability just proved, all vertices in a component \(D\) share a common modal output \(m_D\).

For a component \(D\), at most \(2\alpha+1\) of its vertices satisfy \(|z-m_D|\le\alpha\).
Thus, if \(|D|>4\alpha\), then for uniformly random \(z\in D\),
\[
\Pr\!\left[|\M_{U^*}(z)-z|>\alpha\right]
\ge
\frac78\left(1-\frac{2\alpha+1}{|D|}\right)
\ge \frac{7}{20}.
\]
Together with \eqref{eq:fixed-u-accuracy}, this implies that the total number of vertices lying in components of size larger than \(4\alpha\) is at most \(6\beta n\).
Consequently the total size of components of size at most \(4\alpha\) is at least
\[
(1-6\beta)n-|S|.
\]

On the other hand, the barrier property bounds the number of components by \(|S|/k+1\).
Hence the total size of components of size at most \(4\alpha\) is at most
\[
4\alpha\left(\frac{|S|}{k}+1\right)
\le
4\alpha\left(\frac{6Hn}{k}+1\right).
\]
Combining the last two displays and dividing by \(n\), then letting \(n\to\infty\), gives
\[
1-6\beta
\le
6H+\frac{24\alpha H}{k}.
\]
By \eqref{eq:alpha-at-least-privacy-scale} and the definition of \(k\), the ratio \(\alpha/k\) is bounded below by an absolute constant.
Since \(\beta<1/10\), this implies
\[
H\ge c'\frac{k}{\alpha}
\ge c\,\frac{C}{\varepsilon\alpha}
\]
for absolute constants \(c',c>0\).
This is the desired lower bound.
\end{proof}

\section{Properties of the Dithered Gaussian Mechanism}\label{app:proofs}

This section contains full proofs of properties of the dithered Gaussian mechanism that were sketched in the main body.

\ditheredGaussianRoundingLemma*
\label{lem:uniform_distribution_proof}
\begin{proof}

The dithered Gaussian mechanism is defined in~\eqref{eq:mechanism}:

\begin{equation}
    \M(f(X))_i
=
\xi\left(
\left\lfloor
\frac{f(X)_i+y_i}{\xi}-\gamma_i+\frac12
\right\rfloor
+\gamma_i
\right),
\end{equation}
where $\gamma_i=(ia+b)\bmod 1$ with $a,b\sim \mathrm{Uniform}([0,1)^2)$ independent of $X$ and $y$.
Consider the value

\begin{equation}
    \frac{u_i}{\xi}
    :=
    \left\lfloor t_i - \gamma_i + \frac{1}{2} \right\rfloor
    -t_i+\gamma_i,
    \qquad
    \text{where}
    \qquad
    t_i = \frac{f(X)_i+y_i}{\xi}.
\end{equation}

We will show that $\frac{u_i}{\xi}$ is uniformly distributed on $[-1/2,1/2]$ independently of $t_i$, which is sufficient for the proof of the lemma.
Let $\{\cdot\}$ denote the fractional part.
Then

\begin{equation}
    t_i-\gamma_i+\frac{1}{2}
    =
    \left\lfloor t_i-\gamma_i+\frac{1}{2} \right\rfloor
    +
    \left\{t_i-\gamma_i+\frac{1}{2}\right\}.
\end{equation}

Therefore,

\begin{equation}
    \frac{u_i}{\xi}
    =
    \frac{1}{2}
    -
    \left\{t_i-\gamma_i+\frac{1}{2}\right\}.
\end{equation}

It remains to show that the fractional part is uniformly distributed on $[0,1)$ and independent of $t_i$.
Fix arbitrary values of $t_i$ and $a$.
Conditional on these values, the quantity $t_i-ia+1/2$ is a constant, while $b$ remains uniform on $[0,1)$.
Hence, for any measurable set $\mathcal{U}\subseteq[0,1)$,

\begin{equation}
    \Pr\left(
    \left\{t_i-ia+\frac{1}{2}-b\right\}\in \mathcal{U}
    \,\middle|\, t_i,a
    \right)
    =
    \lambda(\mathcal{U}),
\end{equation}

where $\lambda$ denotes Lebesgue measure on $[0,1)$.
This follows because the map

\begin{equation}
    b
    \mapsto
    \left\{t_i-ia+\frac{1}{2}-b\right\}
\end{equation}

is a measure-preserving shift and reflection modulo $1$.
Therefore,

\begin{equation}
    \left\{t_i-\gamma_i+\frac{1}{2}\right\}
    \sim \mathrm{Uniform}([0,1))
\end{equation}

independently of $t_i$.
\end{proof}

\ditheredGaussianTVBound*

\begin{proof}

The proof is based on Lemma~2 of \cite {christofides2009bounds}, together with an explicit computation of $\|\phi''\|_1$ and substitution of the variance of the uniform random variable. For completeness, we provide the full proof here.

By scaling all random variables by \(1/\sigma\), it suffices to compare the standard Gaussian \(G\sim\mathcal N(0,1)\) to \(G+U_r\), where \(U_r\sim\mathrm{Uniform}(-r/2,r/2)\) and \(r=\xi/\sigma\).
The density of \(G+U_r\) is
\begin{equation*}
    q_r(x)
    =
    \frac{1}{r}
    \int_{x-r/2}^{x+r/2}\phi(t)\,dt
    =
    \frac{\Phi(x+r/2)-\Phi(x-r/2)}{r}.
\end{equation*}
Thus
\begin{equation*}
    \dTV(G_\sigma+U_\xi,G_\sigma)
    =
    \frac12
    \int_{-\infty}^{\infty}
    \left|
    \frac{\Phi(x+r/2)-\Phi(x-r/2)}{r}
    -
    \phi(x)
    \right|\,dx ,
\end{equation*}
since total variation distance between two distributions with densities is one half of the \(L^1\)-distance between the densities.

We next prove the bound.
Using \(\mathbb E[U_r]=0\), we can write
\begin{equation*}
    q_r(x)-\phi(x)
    =
    \mathbb E\!\left[
        \phi(x+U_r)-\phi(x)-U_r\phi'(x)
    \right].
\end{equation*}
For a fixed \(v\in[-r/2,r/2]\), Taylor's formula with integral remainder gives
\begin{equation*}
    \phi(x+v)-\phi(x)-v\phi'(x)
    =
    \int_0^v (v-s)\phi''(x+s)\,ds.
\end{equation*}
Taking \(L^1\)-norms and using translation-invariance of the Lebesgue integral,
\begin{align}
    \left\|\phi(\cdot+v)-\phi-v\phi'\right\|_1
    &\le
    \frac{v^2}{2}\,\|\phi''\|_1.
\end{align}
Therefore
\begin{equation*}
    \|q_r-\phi\|_1
    \le
    \frac{\mathbb E[U_r^2]}{2}\,\|\phi''\|_1
    =
    \frac{r^2}{24}\,\|\phi''\|_1.
\end{equation*}
Since \(\phi''(x)=(x^2-1)\phi(x)\), we have
\begin{align}
    \|\phi''\|_1
    &=
    \int_{-\infty}^{\infty}|x^2-1|\phi(x)\,dx \notag\\
    &=
    2\int_{-1}^{1}(1-x^2)\phi(x)\,dx
    =
    4\phi(1),
\end{align}
where the last equality follows from \((x\phi(x))'=(1-x^2)\phi(x)\).
Thus
\begin{equation*}
    \dTV(G+U_r,G)
    =
    \frac12\|q_r-\phi\|_1
    \le
    \frac{\phi(1)}{12}r^2,
\end{equation*}
and \(\phi(1)/12<0.0202\), which gives~\eqref{eq:dithered_gaussian_tv_bound}.
\end{proof}
We remark that the constant \(\phi(1)/12\) is sharp for a bound of this form.

\medskip

Before proving Lemma~\ref{lem:dithered_gaussian_entropy_bound}, we show an auxiliary upper bound on the entropy of an integer-valued random variable
in terms of its variance.

\begin{lemma}[Discrete entropy bound {\cite[Theorem 9.7.1]{cover1999elements}}]
\label{lem:entropy_bound}
Let $Z$ be an integer-valued random variable with finite variance.
Then its Shannon binary entropy satisfies
\begin{equation}
    H(Z)\le \frac12 \log_2\left(2\pi e\left(\mathrm{Var}(Z)+\frac1{12}\right)\right).
\end{equation}
\end{lemma}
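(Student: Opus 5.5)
The plan is to pass from the discrete variable to a continuous one whose differential entropy equals $H(Z)$ and whose variance is exactly $\mathrm{Var}(Z)+\frac1{12}$. We then apply the classical maximum-entropy property of the Gaussian. Concretely, let $U\sim\mathrm{Uniform}([0,1))$ be independent of $Z$, and set $X=Z+U$. Write $p_k=\Pr[Z=k]$. Then $X$ has the piecewise-constant density $g(x)=p_{\lfloor x\rfloor}$, because the intervals $[k,k+1)$ are disjoint and each has unit length.

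The first step is to compute the differential entropy of $X$:
\[
h(X)=-\int g\log_2 g\,dx=-\sum_{k\in\ZZ} p_k\log_2 p_k = H(Z),
\]
since $g\equiv p_k$ on an interval of length one. The second step is the variance. By independence, $\mathrm{Var}(X)=\mathrm{Var}(Z)+\mathrm{Var}(U)=\mathrm{Var}(Z)+\frac1{12}$, which is finite by hypothesis.

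The third step is the Gaussian maximum-entropy bound $h(X)\le \frac12\log_2(2\pi e\,\mathrm{Var}(X))$. I would prove it directly via Gibbs' inequality rather than cite it. Let $\varphi$ be the density of $\mathcal N(\E X,\mathrm{Var}(X))$. Nonnegativity of the KL divergence gives
\[
0\le \int g\log_2\frac{g}{\varphi} = -h(X) - \int g\log_2\varphi .
\]
The last integral equals $-\frac12\log_2(2\pi e\,\mathrm{Var}(X))$, because $\log_2\varphi$ is a quadratic whose expectation under $g$ depends only on the first two moments of $X$. Combining the three steps yields the claimed bound.

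The main technical point is rigor when $H(Z)$ might a priori be infinite. I would handle it by writing the inequality in the form
\[
\sum_k p_k\log_2\frac{1}{p_k} \le -\int g\log_2\varphi .
\]
This is the Gibbs inequality applied to the nonnegative sum/integral $\int g\log(g/\varphi)\ge 0$. The right-hand side is finite because $\mathrm{Var}(X)<\infty$. The inequality can be justified by truncating to finitely many $k$, renormalizing, and letting the truncation grow, or simply by using $\log y\le y-1$ pointwise, which avoids any $\infty-\infty$ issues. This shows that $H(Z)$ is finite and bounded as stated.
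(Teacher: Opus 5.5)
Your proposal is correct and follows the paper's proof essentially step for step. Both arguments dither $Z$ by an independent uniform of unit width (you use $[0,1)$, the paper uses $(-1/2,1/2]$), so that the differential entropy of $Z+U$ equals $H(Z)$ and its variance is $\mathrm{Var}(Z)+\frac1{12}$, and then apply the Gaussian maximum-entropy bound. The differences are minor: you prove the maximum-entropy step via Gibbs' inequality instead of citing it, and you explicitly rule out $H(Z)=\infty$, which the paper's proof does not address.
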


\begin{proof}
We provide a more detailed, self-contained proof.
Let $U\sim \mathrm{Uniform}(-1/2,1/2]$ be independent of $Z$.
Then $Z+U$ has density
\begin{equation}
    \rho(x):=\rho_{Z+U}(x)
    =
    \sum_{k=-\infty}^{+\infty}
    \Pr[Z=k]\mathbf{1}_{\{k-\frac12 < x \le k+\frac12\}}.
\end{equation}
The density $\rho$ is nonnegative and pointwise bounded by $\max_k\Pr[Z=k]\le 1$.
Therefore, if $\mu$ denotes the distribution of $Z+U$, then for every measurable set $\mathcal{A}$,
\begin{equation}
    \mu(\mathcal{A})
    =
    \int_{\mathcal{A}}\rho(x)\,dx
    \le
    \lambda(\mathcal{A}).
\end{equation}
In particular, if $\lambda(\mathcal{A})=0$, then $\mu(\mathcal{A})=0$.
Hence $\mu$ is absolutely continuous with respect to the Lebesgue measure $\lambda$.
Therefore, we can compute the differential entropy of $Z+U$:

\begin{align}
    h(Z+U)
    &= -\int_{-\infty}^{+\infty}\rho(x)\log_2\rho(x)\,dx = \sum_{k=-\infty}^{+\infty}
    -\int_{k-\frac12}^{k+\frac12} \rho(x)\log_2\rho(x)\,dx\\
    &= \sum_{k=-\infty}^{+\infty}
    -\int_{k-\frac12}^{k+\frac12}
    \Pr[Z=k]\log_2\Pr[Z=k]\,dx \\
    &= -\sum_{k=-\infty}^{+\infty}\Pr[Z=k]\log_2\Pr[Z=k] = H(Z).
\end{align}

Thus, the differential entropy of $Z+U$ coincides with the discrete entropy of $Z$.
Since the Gaussian distribution maximizes differential entropy among all absolutely continuous random variables with fixed variance,
\begin{equation}
    h(Z+U)\le \frac12\log_2\left(2\pi e\,\mathrm{Var}(Z+U)\right).
\end{equation}

Using independence of $Z$ and $U$,
\begin{equation}
    \mathrm{Var}(Z+U)
    = \mathrm{Var}(Z)+\mathrm{Var}(U)
    = \mathrm{Var}(Z)+\frac1{12}.
\end{equation}

Combining the above identities yields
\begin{equation}
    H(Z)\le \frac12 \log_2\left(2\pi e\left(\mathrm{Var}(Z)+\frac1{12}\right)\right),
\end{equation}
which completes the proof.
\end{proof}

\ditheredGaussianEntropyBound*
\label{lem:dithered_gaussian_entropy_bound_proof}

\begin{proof}
Recall from \eqref{eq:Z_k} that, conditionally on $\gamma_i$ and $f(X)_i$, the distribution of $Z_i$ is given by
\begin{equation}
\Pr[Z_i=k]
=
\Phi\!\left(
\frac{\xi}{\sigma}\left(k+\gamma_i+\frac12-\frac{f(X)_i}{\xi}\right)
\right)
-
\Phi\!\left(
\frac{\xi}{\sigma}\left(k+\gamma_i-\frac12-\frac{f(X)_i}{\xi}\right)
\right).
\end{equation}
Given the grid $\gamma$, the random variables $Z_i$ are independent.
Therefore,
\begin{equation}
\label{eq:entropy_Z_bound}
    H(Z\mid \gamma)=\sum_{i=1}^{d} H(Z_i\mid \gamma).
\end{equation}

By Lemma~\ref{lem:entropy_bound}, for each coordinate we have
\begin{equation}
    H(Z_i\mid \gamma)
    \le
    \frac12 \log_2\left(
        2\pi e\left(\mathrm{Var}(Z_i\mid \gamma)+\frac1{12}\right)
    \right).
\end{equation}

Now consider the random variable
$Y_i\sim \mathcal{N}\left(\frac{f(X)_i}{\xi}-\gamma_i,\frac{\sigma^2}{\xi^2}\right)$.
Then $Z_i$ can be obtained by rounding $Y_i$ to the nearest integer, since
\begin{equation}
    \Pr\left(k-\frac12 \le Y_i \le k+\frac12\right)
    =
    \Pr[Z_i=k].
\end{equation}
Using the Cauchy-Schwarz inequality, we get
\begin{equation}
    \mathrm{Var}(Z_i\mid \gamma_i)
    =
    \mathrm{Var}\left(Y_i+(Z_i-Y_i)\mid \gamma_i\right)
    \le
    \left(
        \sqrt{\mathrm{Var}(Y_i\mid \gamma_i)}
        +
        \sqrt{\mathrm{Var}(Z_i-Y_i\mid \gamma_i)}
    \right)^2.
\end{equation}
Since $\mathrm{Var}(Y_i\mid \gamma_i)=\sigma^2/\xi^2$ and
$|Z_i-Y_i|\le 1/2$, Popoviciu's inequality \cite{popoviciu1935equations} gives
\begin{equation}
    \mathrm{Var}(Z_i-Y_i\mid \gamma_i)\le \frac14.
\end{equation}
Hence
\begin{equation}
    \mathrm{Var}(Z_i\mid \gamma_i)
    \le
    \left(\frac{\sigma}{\xi}+\frac12\right)^2.
\end{equation}

Combining the above inequalities, we obtain
\begin{equation}
    H(Z\mid \gamma)
    \le
    \frac{d}{2}\log_2\left(
        2\pi e\left[
            \left(\frac{\sigma}{\xi}+\frac12\right)^2+\frac1{12}
        \right]
    \right).
\end{equation}
The right-hand side does not depend on $\gamma$, so taking expectation over $\gamma$ does not change the bound.
\end{proof}

\section{Dithered Laplace Mechanism}
\label{sec:dithered_laplace}

The proposed scheme can be generalized to other types of additive-noise mechanisms.
In this section, we discuss the implications of our dithering scheme for the Laplace mechanism.
In a recent paper, Canonne, Su, and Vadhan \cite{canonne2025randomness} showed that the randomness complexity of differentially private mechanisms for vector summation can be drastically reduced.
When the statistic $f$ has $\ell_1$-sensitivity $\Delta_1(f)$, there exists an $\varepsilon$-differentially private mechanism that for $\alpha = \Omega(\Delta_1(f)\log(d)/\varepsilon)$ is \((\alpha,\text{poly}(1/d))\)-accurate with expected randomness complexity

\begin{restatable}{corollary}{CSVGeneralAlpha}\label{cor:csv-general-alpha}
Let \(\varepsilon>0\), \(\beta\in(0,1)\), and suppose the accuracy target
\(\alpha\) satisfies
\[
\alpha \ge C_0\,\frac{\Delta_1(f)}{\varepsilon}\,\log(2d/\beta)
\]
for a sufficiently large universal constant~\(C_0\).
Then there exists an
\(\varepsilon\)-differentially private mechanism \(\M\) that is
\((\alpha,\beta)\)-accurate and whose expected randomness complexity is
\[
R(\M)
=
O\!\left(
\frac{d\,\Delta_1(f)}{\alpha\,\varepsilon}
\cdot \log(d) \cdot
\log\!\frac{\alpha\,\varepsilon}{\Delta_1(f)\log d}
+ \log d
\right).
\]
\end{restatable}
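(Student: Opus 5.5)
The plan is to obtain the corollary from the secluded-partition framework of Canonne, Su, and Vadhan~\cite{canonne2025randomness}, instantiated with a randomly shifted axis-aligned grid. This is the one-dimensional picture already used for the dithered Gaussian mechanism, applied per coordinate with Laplace noise. I expect the \(\log d\) and \(\log(\alpha\varepsilon/(\Delta_1(f)\log d))\) factors to come from an ambiguity-probability computation.

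Concretely, set \(\lambda=\Delta_1(f)/\varepsilon\). The mechanism is conceptually the Laplace mechanism \(f(X)+y\) with \(y_i\sim\mathrm{Laplace}(\lambda)\), followed by rounding each coordinate to a grid of width \(w=\Theta(\alpha)\) with a dither \(\gamma_i\). Pure \(\varepsilon\)-DP then follows from Lemma~\ref{lem:laplace-privacy} together with post-processing. For accuracy, the rounding error is at most \(w/2\), and a union bound over the Laplace tails gives \(\|y\|_\infty\le\lambda\log(2d/\beta)\) with probability \(1-\beta\). The hypothesis \(\alpha\ge C_0\lambda\log(2d/\beta)\) therefore leaves room to take \(w=\alpha\) and still be \((\alpha,\beta)\)-accurate.

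For randomness, we sample each \(Z_i\) directly from its discrete distribution given \(\gamma\), exactly as in Algorithm~\ref{alg:dithered-gaussian} but with the Laplace CDF. The total cost is then \(\sum_i H(Z_i\mid\gamma)\) plus the overhead of a Knuth--Yao-type sampler run on the concatenated distribution. The central step is bounding \(\E_\gamma H(Z_i\mid\gamma)\). Let \(\rho\) be the distance from the scaled input \(f(X)_i/w-\gamma_i+\tfrac12\) to the nearest integer, which is uniform on \([0,\tfrac12]\) by the argument of Lemma~\ref{lem:full_noise_dithered_gaussian}. I would split into two cases.

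\emph{Case \(\rho w\ge \lambda L\) with \(L=\Theta(\log(w/\lambda))\).} Then \(Z_i\) is deterministic except with probability \(p\le e^{-\rho w/\lambda}\). Its entropy is at most \(h_2(p)+p\cdot O(\log(w/\lambda))\), which integrates to \(O(\lambda/w)\).

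\emph{Case \(\rho w<\lambda L\).} This occurs with probability \(O(\lambda L/w)\). Here the entropy is bounded by Lemma~\ref{lem:entropy_bound} through the variance, which is \(O((\lambda/w)^2+1)\), giving \(O(1)\) bits.

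Altogether \(\E_\gamma H(Z_i\mid\gamma)=O\bigl(\tfrac{\lambda}{w}\log\tfrac{w}{\lambda}\bigr)\). Summing over the \(d\) coordinates and inserting \(w=\alpha\) gives \(O\bigl(\tfrac{d\lambda}{\alpha}\log\tfrac{\alpha}{\lambda}\bigr)\). Writing \(\alpha/\lambda=(\alpha\varepsilon/(\Delta_1(f)\log d))\cdot\log d\), the logarithm splits into the stated \(\log\tfrac{\alpha\varepsilon}{\Delta_1(f)\log d}\) factor plus a \(\log\log d\) term.

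The main obstacle is converting this entropy bound into the claimed randomness complexity, in particular producing the multiplicative \(\log d\) and the additive \(\log d\). The cost of a near-optimal sampler for a single low-entropy coordinate is not \(H(Z_i)\) but roughly \(H(Z_i)+O(1)\) per coordinate, which would give \(\Omega(d)\). To avoid this, I would first privately sample the set \(A\) of ambiguous coordinates, i.e.\ those whose \(Z_i\) differs from the noiseless rounding. Each index in \(A\) costs \(O(\log d)\) bits to encode, which explains the multiplicative \(\log d\). Given \(A\), I would sample the values of the ambiguous coordinates, each with \(O(\log(w/\lambda))\) bits. The expected size of \(A\) is \(O(d\lambda/w)\), and the additive \(\log d\) covers encoding \(|A|\) and the interval-algorithm overhead.

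The delicate point is that the coordinates are only pairwise independent under the \((a,b)\) dither. If that breaks the independence needed to enumerate \(A\) efficiently, I would fall back on fully independent \(\gamma_i\); this is harmless, since that randomness is public and is not counted in \(R(\M)\).
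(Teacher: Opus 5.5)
\paragraph{The gap: the dither's randomness is never counted.}
Your privacy, accuracy and per-coordinate entropy steps are sound. They essentially reproduce the paper's dithered-Laplace analysis in Lemma~\ref{lem:main-theorem-detailed}(b), which averages via concavity of $h_2$ where you split into cases. The paper itself quotes this corollary from \cite{canonne2025randomness} rather than proving it, and there $R(\M)$ is the expected number of \emph{all} random bits used to produce the output. That work has no public/private split, and the additive $\log d$ term is there to pay for the shift. The paper says as much when it notes that its $O(\log d)$ public bits match the $\log_2 d-O(1)$ lower bound on total randomness. You never bound the bits spent on the dither; you declare them free. As written they are not even finite: $(a,b)$ are uniform reals, and the released value $\xi(Z_i+\gamma_i)$ depends on $\gamma_i$ to full precision. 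Your fallback to fully independent $\gamma_i$ is worse, since its cost grows at least linearly in $d$. That breaks the claim: for $\alpha\ge d\Delta_1(f)\log d/\varepsilon$ the target bound is $O(\log d)$, and your private-entropy bound is $O(1)$ there. So the additive $\log d$ is not overhead for sampling $|A|$ or for the sampler; it is the price of the shift, and that price must be bounded.

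\paragraph{The missing step: discretize the shift and recheck the averaging.}
For example, use a single common shift $\gamma_i=\tilde b$, with $\tilde b$ uniform on $2^{-m}\ZZ\cap[0,1)$ and $m=\lceil\log_2 d\rceil$ (compare the quantization of $(a,b)$ in Lemma~\ref{lem:main-theorem-detailed}(c)--(d)).
\begin{itemize}
\item Privacy (post-processing for each fixed shift) and accuracy (rounding error at most $w/2$) are unaffected.
\item The distance $\rho$ to the nearest boundary is now uniform on a grid of spacing $2^{-m}$ instead of on $[0,\tfrac12]$. Your per-coordinate bound can be taken non-increasing in $\rho$ and $O(1)$, so a Riemann-sum comparison adds only $O(2^{-m})$ per coordinate, i.e.\ $O(1)$ in total.
\item Sample with a single Knuth--Yao sampler on the joint conditional law of $(Z_1,\dots,Z_d)$ given the shift, as in your first remark. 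It uses fewer than $H(Z\mid\gamma)+2$ bits in expectation.
\end{itemize}
Together these give $R(\M)\le m+O\bigl(\tfrac{d\lambda}{\alpha}\log\tfrac{\alpha}{\lambda}\bigr)+O(1)$. This is within the target by your splitting of the logarithm together with $\log\tfrac{\alpha\varepsilon}{\Delta_1(f)\log d}\ge\log C_0$.

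\paragraph{Steps you can drop.}
You never need to manufacture a multiplicative $\log d$: your bound already beats the leading term by that factor, which is the content of Theorem~\ref{thm:main_dithered_laplace}. The ambiguous-set detour can go; as stated it also conflates encoding cost with sampling cost. The pairwise-independence worry is moot as well: conditional on the shift the $Z_i$ are independent, and the expectation uses only uniform marginals.
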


Choosing error parameter \(\alpha = d^2\log^2(d) / \eps\) minimizes randomness complexity to \(O(\log d)\) bits.
At the other end of the trade-off, error \(\alpha = O(\Delta_1(f) \log(d)/ \eps)\) is asymptotically optimal~\cite{hardt2010geometry,geng2016optimal} among \(\eps\)-differentially private algorithms for \( \beta = 1/\text{poly}(d) \) and results in randomness complexity \(\E[H(\M(f(X)))] = O(d)\) bits.
While the mechanism of~\cite{canonne2025randomness} is not computationally efficient, recent work by Ghentiyala~\cite{ghentiyala2026} provides a computationally efficient alternative (with slightly higher randomness complexity). 
By replacing the cumulative distribution function in the dithered Gaussian algorithm with that of the Laplace distribution with parameter $\lambda$ (which controls the privacy level), we obtain the dithered Laplace mechanism; see Algorithm~\ref{alg:dithered-laplace}.
On the upper bound side we show an improvement over the bound of Canonne, Su, and Vadhan~\cite{canonne2025randomness} in the following theorem.
\begin{algorithm}[t]
\caption{Dithered Laplace Mechanism}
\label{alg:dithered-laplace}
\begin{algorithmic}[1]
\Require Function value $f(X) \in \mathbb{R}^d$, parameters $\xi > 0$, $\lambda > 0$
\State Sample public randomness $(a,b) \sim \mathrm{Uniform}([0,1)^2)$
\For{$i = 1, \dots, d$}
    \State $\gamma_i \gets (a \cdot i + b) \bmod 1$
    \State Sample an integer $Z_i \in \mathbb{Z}$ such that, for every $k \in \mathbb{Z}$,
    \[
    \Pr[Z_i = k]
    =
    F_{\lambda}\!\left(
    \xi\left(k+\gamma_i+\frac12\right)-f(X)_i
    \right)
    -
    F_{\lambda}\!\left(
    \xi\left(k+\gamma_i-\frac12\right)-f(X)_i
    \right),
    \]
    \qquad where $F_{\lambda}(t) = \begin{cases}
\frac{1}{2} e^{t/\lambda}, & t < 0,\\[4pt]
1 - \frac{1}{2} e^{-t/\lambda}, & t \ge 0,
\end{cases}$
    \State $\mathcal{M}(f(X))_i \gets \xi (Z_i + \gamma_i)$
\EndFor
\State \Return $\mathcal{M}(f(X))$
\end{algorithmic}
\end{algorithm}

\begin{restatable}{theorem}{MainUpperBound}\label{thm:main_dithered_laplace}
Suppose \(f:\mathbb{X}^*\rightarrow\mathbb{R}^d\) has sensitivity
\(\Delta_1(f)\).
For every \(\varepsilon > 0\), \(\beta\in (0,1)\), and
\[
\alpha > \frac{2\Delta_1(f)\log(d/\beta)}{\varepsilon},
\]
there exists a random variable \(U\) and mechanisms
\(\mathcal{M}_U: \mathbb{X}^* \rightarrow\mathbb{R}^d\), defined by
post-processing the Laplace mechanism on \(f(X)\), that are \(\varepsilon\)-differentially private for
every fixed value of \(U\).
Moreover, \(\mathcal{M}_U(f(X))\) is an
\((\alpha,\beta)\)-accurate approximation of \(f(X)\), and its expected private
randomness complexity satisfies
\[
    \mathbb{E}_U\!\big[H(\M_U(f(X)))\big]
    =
    O\!\left(
    \frac{d\,\Delta_1(f)}{\alpha \varepsilon}
    \log\!\left(\frac{\alpha \varepsilon}{\Delta_1(f)}\right)
    \right).
\]
\end{restatable}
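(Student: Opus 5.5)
We would instantiate $\M_U$ as the dithered Laplace mechanism of Algorithm~\ref{alg:dithered-laplace} with public randomness $U=(a,b)$ and Laplace scale $\lambda=\Delta_1(f)/\varepsilon$. For each fixed $U$, the output is a deterministic rounding of $f(X)+y$ with $y_i\sim\mathrm{Laplace}(\lambda)$. By Lemma~\ref{lem:laplace-privacy} and post-processing, $\M_U$ is therefore $\varepsilon$-differentially private for every fixed $U$.

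For accuracy, we take the grid width $\xi=\alpha$ (or $\alpha$ up to a constant factor). The error in coordinate $i$ is at most $|y_i|+\xi/2$. We then split the error budget, requiring $|y_i|\le\alpha/2$ and the rounding error to be at most $\xi/2\le\alpha/2$. Since $\Pr[|y_i|>\alpha/2]=e^{-\alpha/(2\lambda)}$, the hypothesis $\alpha>2\lambda\log(d/\beta)$ gives a per-coordinate failure probability of at most $\beta/d$ after adjusting the constants. To make this clean we may instead take $\xi=\alpha/2$, or note that the constant 2 in the hypothesis is chosen to make this work. A union bound over the $d$ coordinates then gives $(\alpha,\beta)$-accuracy in $\ell_\infty$.

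For entropy, conditioned on $U$ the coordinates $Z_i$ are independent, so $H(\M_U)=\sum_i H(Z_i\mid\gamma)$. Fix $i$ and write $c=f(X)_i/\xi-\gamma_i+\tfrac12$, whose fractional part $\theta$ is uniform on $[0,1)$ because $\gamma_i$ is uniform (exactly as in the proof of Lemma~\ref{lem:full_noise_dithered_gaussian}). Then $Z_i=\lfloor c+y_i/\xi\rfloor$. Let $p=\Pr[Z_i\neq\lfloor c\rfloor]$. By Fano-type grouping,
\[
H(Z_i)\le h_2(p)+p\,H(Z_i\mid Z_i\neq\lfloor c\rfloor),
\]
and the conditional term is $O(\log(2+\lambda/\xi))$ by Lemma~\ref{lem:entropy_bound}, since the conditional variance is $O(1+(\lambda/\xi)^2)/p$. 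More carefully, we can bound the tails directly. The leading event is that $y_i/\xi$ crosses the nearest cell boundary, at distance $\min(\theta,1-\theta)$. Hence
\[
\E_\theta[p]\le 2\int_0^{1/2}\tfrac12 e^{-t\xi/\lambda}\,dt\le \lambda/\xi .
\]

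The main obstacle is combining these estimates without losing an extra logarithmic factor from the $1/p$ inside the variance. We would try bounding $H(Z_i)\le h_2(p)+p\cdot O(\log(2+\lambda/\xi))+O(p)$, where the last step uses that, given a crossing, the number of cells crossed is geometric with ratio $e^{-\xi/\lambda}$, whose entropy is $O(1)$ when $\xi\ge\lambda$. Then we take expectation over $\theta$ and use concavity of $h_2$ via Jensen, obtaining $\E h_2(p)\le h_2(\lambda/\xi)=O\bigl((\lambda/\xi)\log(\xi/\lambda)\bigr)$. Summing over the $d$ coordinates with $\lambda/\xi=2\Delta_1(f)/(\alpha\varepsilon)$ yields
\[
O\!\left(\frac{d\Delta_1(f)}{\alpha\varepsilon}\log\frac{\alpha\varepsilon}{\Delta_1(f)}\right),
\]
which is the claimed bound. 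Since the bound holds for every $X$ and the expectation is only over $U$, this completes the plan.
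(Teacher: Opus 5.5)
Your proposal follows essentially the same route as the paper's proof of Lemma~\ref{lem:main-theorem-detailed}: the dithered Laplace mechanism with $\lambda=\Delta_1(f)/\varepsilon$, privacy by post-processing for each fixed $(a,b)$, accuracy by a union bound with $\xi=\alpha$ (which works directly from the hypothesis with no constant adjustment), and a per-coordinate split into ``no boundary crossed'' plus a memoryless geometric tail, averaged over the uniform shift using concavity of $h_2$. One small slip is that your displayed integral for $\E_\theta[p]$ counts only the nearer boundary; adding the far-boundary term gives $\E_\theta[p]=\frac{\lambda}{\xi}\bigl(1-e^{-\xi/\lambda}\bigr)\le\lambda/\xi$, so your conclusion stands (the paper instead uses the cruder two-sided bound $e^{-\tau_i/\lambda}$).
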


Compared with Corollary~\ref{cor:csv-general-alpha}, our bound removes the additive \(\log(d)\) term and improves the leading term by a factor of \(\log(d)\).
For error tolerance \(\alpha = \omega(d \Delta_1(f) \log(d) / \eps)\), the private randomness complexity becomes \(\mathbb{E}_U\!\big[H(\M_U(f(X)))\big] = o(1)\) independent of \(X\).
That is, the mechanism requires only a negligible amount of fresh, private randomness to run.
For error \(\alpha = \Theta(\Delta_1(f) \log(d/\beta)/\eps)\), matching the \(\ell_\infty\)-error of the standard Laplace mechanism, the private randomness complexity is \(\mathbb{E}_U\!\big[H(\M_U(f(X)))\big] = O(d \log\log(d) / \log d)\) which is \(o(d)\).
The mechanism's public randomness \(U\) can be sampled using \(O(\log(d))\) random bits with high probability, matching the lower bound~\eqref{eq:csv-lower} on total randomness.
We prove Theorem~\ref{thm:main_dithered_laplace}, together with a bound on public randomness complexity, in the following lemma.

\begin{restatable}{lemma}{mainTheoremDetailed}\label{lem:main-theorem-detailed}
Let $f:\mathbb{X}^*\to\mathbb{R}^d$ have $\ell_1$-sensitivity $\Delta_1(f)$.
Consider the dithered Laplace mechanism $\M$ defined in Algorithm~\ref{alg:dithered-laplace} with parameters $\xi>0$, $\lambda>0$, and public randomness \(U=(a,b)\).
Fix $\beta\in(0,1)$.
The mechanism satisfies:
\begin{enumerate}[label=(\alph*)]
    \item \textbf{Pure differential privacy:} For every choice of $\varepsilon>0$, if $\lambda = \Delta_1(f)/\varepsilon$, then $\M$ is $\varepsilon$-differentially private and $(\xi,\beta)$-accurate with
    \[
      \xi = 2\,\Delta_1(f) \log(d/\beta)/\varepsilon.
\]
    \item \textbf{Private randomness complexity:}
    The mechanism has expected private entropy
    \[
    \E_U\!\big[H(\M_U(f(X)))\big]\ =\ O\!\left(
    \frac{d\,\lambda}{\xi}\,
    \log\!\frac{\xi}{\lambda}
    \right).
\]
    \item \textbf{Public randomness complexity:}
    If relaxed to output a special value \(\bot\) with probability at most \(\beta\), the mechanism can be implemented using $O(\log(d/\beta))$ bits of public randomness.
\item \textbf{Distributional properties:} If relaxed to use an unbounded number of public random bits, the mechanism can be implemented such that it uses $O(\log(d/\beta))$ public random bits with probability \(1-\beta\) and so that
    \begin{itemize}
    \item it is unbiased, \(\E[\M(f(X))]=f(X)\), and 
    \item its coordinates are pairwise independent, \(\text{Cov}(\M(f(X))_i,\M(f(X))_j)=0\) for all \(i\ne j\).
\end{itemize}
  \end{enumerate}
\end{restatable}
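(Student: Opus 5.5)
Each of the four parts is handled separately; the shared device is that, conditionally on the public randomness $U=(a,b)$, the mechanism equals the deterministic rounding map $v\mapsto \xi(\lfloor v/\xi-\gamma+\tfrac12\rfloor+\gamma)$ applied coordinatewise to the Laplace mechanism output $f(X)+y$, $y_i\sim\mathrm{Laplace}(\lambda)$.

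\textbf{Part (a).} Privacy follows from Lemma~\ref{lem:laplace-privacy} with $\lambda=\Delta_1(f)/\varepsilon$, combined with the post-processing lemma applied for each fixed $U$. For accuracy, the rounding error is at most $\xi/2$ deterministically. It therefore suffices that $|y_i|\le \xi/2$ for all $i$ with probability $1-\beta$. Since $\Pr[|y_i|>t]=e^{-t/\lambda}$, setting $t=\xi/2=\lambda\log(d/\beta)$ and taking a union bound over the $d$ coordinates gives failure probability at most $\beta$. Hence the total error is below $\xi$.

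\textbf{Part (b).} This is the main step. Coordinates are independent given $\gamma$, so $H(Z\mid\gamma)=\sum_i H(Z_i\mid\gamma_i)$. By Lemma~\ref{lem:full_noise_dithered_gaussian}'s argument, each $\gamma_i$ is marginally uniform, so it suffices to bound $\E_{\gamma}H(Z_i\mid\gamma)$ for a single coordinate with uniform shift. Write $c=f(X)_i/\xi-\gamma_i+\tfrac12$, whose fractional part $\theta$ is uniform on $[0,1)$. Then $Z_i=\lfloor c+y_i/\xi\rfloor$. Let $k_0=\lfloor c\rfloor$ be the noiseless value and let $p=\Pr[Z_i\ne k_0]$. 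Grouping the outcomes gives
\[
H(Z_i)\le h_2(p)+p\,H(Z_i\mid Z_i\ne k_0).
\]
The conditional entropy on the event $Z_i\neq k_0$ is bounded via Lemma~\ref{lem:entropy_bound}: given exit, the overshoot beyond the cell boundary is $\mathrm{Exp}(\lambda)$ by memorylessness, so $|Z_i-k_0|$ has conditional variance $O(1+(\lambda/\xi)^2)$. When $\lambda\le\xi$ this makes the conditional entropy $O(1)$.

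For $p$, exiting the cell requires $y_i/\xi>1-\theta$ or $y_i/\xi<-\theta$. Thus
\[
p\le \tfrac12\bigl(e^{-\xi(1-\theta)/\lambda}+e^{-\xi\theta/\lambda}\bigr),
\qquad
\E_\theta p\le \lambda/\xi .
\]
The obstacle is getting the $\log(\xi/\lambda)$ factor rather than a $\log(1/p)$ that blows up. I would use concavity: $\E_\theta h_2(p)\le h_2(\E_\theta p)\le h_2(\lambda/\xi)=O\bigl(\tfrac{\lambda}{\xi}\log\tfrac{\xi}{\lambda}\bigr)$. Summing over the $d$ coordinates gives the claim. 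The case $\lambda>\xi$ is trivial from Lemma~\ref{lem:entropy_bound}, as in Lemma~\ref{lem:dithered_gaussian_entropy_bound}.

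\textbf{Parts (c) and (d).} Pairwise independence and unbiasedness use the fact that $(\gamma_i,\gamma_j)$ is uniform on $[0,1)^2$ for $i\ne j$. This holds when $a$ is uniform on $[0,1)$ and $j-i$ is a nonzero integer, because $b$ shifts both and $a(j-i)\bmod 1$ is uniform. Then, by Lemma~\ref{lem:full_noise_dithered_gaussian}'s argument, the errors are independent uniform dithers, so the covariance vanishes and the mean is $f(X)$.

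For the finite-bit version, I would replace $a,b$ by uniform elements of $\tfrac1q\mathbb{Z}_q$ with $q$ a prime exceeding $d$ and of size $\mathrm{poly}(d/\beta)$. This keeps pairwise uniformity on the grid $\tfrac1q\mathbb{Z}_q$, which costs $O(\log(d/\beta))$ bits. The residual within-cell offset is handled either by outputting $\bot$ when discretization would matter, with probability at most $\beta$ (part (c)), or by lazily sampling further refining bits of a common uniform offset. The latter is unbounded but needs $O(\log(d/\beta))$ bits with probability $1-\beta$ (part (d)). I expect the delicate point to be showing that the coarse grid perturbs accuracy and privacy only within the $\beta$ budget.
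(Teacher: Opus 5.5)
Parts (a) and (b) are correct and follow the paper. Part (a) is the paper's argument. In (b) you use the same split into staying in the noiseless cell versus exiting it, pay $h_2$ of the exit probability, and average over the uniform cell position via concavity of $h_2$. Working with the exact exit probability $p$, and bounding the exit-conditional entropy through memorylessness and Lemma~\ref{lem:entropy_bound}, avoids the paper's step of inserting a lower bound on $\Pr[K_i=0]$ into the non-monotone $h_2$. For (c), a prime-field grid for $(a,b)$ is a workable substitute for the paper's truncation of the binary expansions of real $a,b$ together with the bad set $\mathcal{B}_\eta$. It changes the law of $\gamma$, but privacy holds for every fixed $U$, the accuracy bound holds for any $\gamma$, and (b) only uses the uniform marginals.

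The gap is in (d). Take $a,b\in\frac1q\mathbb{Z}_q$ with a common refining offset $\rho\sim\mathrm{Uniform}[0,1/q)$; refining only $b$ gives the same problem. Then every pair has $\gamma_i-\gamma_j\equiv a(i-j)\pmod 1$ lying in $\frac1q\mathbb{Z}$. So $(\gamma_i,\gamma_j)$ is supported on finitely many diagonal segments rather than uniform on $[0,1)^2$, and the pairwise-independence argument you gave for real $(a,b)$ no longer applies.

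The covariance claim then actually fails. Write $t_i=(f(X)_i+y_i)/\xi$ and $E_i=\M(f(X))_i-f(X)_i-y_i$. Averaging over the grid part gives $\E[E_i\mid y,\rho]=\xi\bigl(\tfrac1{2q}-((t_i+\tfrac12-\rho)\bmod\tfrac1q)\bigr)$, and $E_i,E_j$ are conditionally independent given $(y,\rho)$. Hence
\[
\mathrm{Cov}\bigl(\M(f(X))_i,\M(f(X))_j\bigr)=\E_y\Bigl[\frac{\xi^2}{2q^2}\,B_2\bigl(\{q(t_i-t_j)\}\bigr)\Bigr],\qquad B_2(x)=x^2-x+\tfrac16 .
\]
This is strictly positive whenever $f(X)_i=f(X)_j$: expanding $B_2$ in its Fourier series, each term is $\frac{1}{2\pi^2m^2}$ times the characteristic function of $q(y_i-y_j)/\xi$ at $2\pi m$, which is positive. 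Only unbiasedness survives, because each $\gamma_i$ is still marginally uniform. Relatedly, the ``delicate point'' you anticipate is not where the difficulty lies. Coarsening the offsets costs nothing for privacy or accuracy; what it can destroy is the exact joint law of the offsets.

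The fix is the paper's route. Keep $(a,b)$ uniform on $[0,1)^2$ as in Algorithm~\ref{alg:dithered-laplace}, and reveal their binary digits lazily until every coordinate's rounding is determined. Let $\mathcal{B}_\eta$ be the set of pairs placing some $\gamma_i$ within $\eta=\beta/(2d)$ of its rounding boundary; it has area at most $2d\eta=\beta$. Off $\mathcal{B}_\eta$, $m=\lceil\log_2(4d^2/\beta)\rceil$ digits of each of $a,b$ suffice, since $(i+1)2^{-m}\le\eta$. Because the implemented law is then exactly the idealized one, Lemma~\ref{lem:affine-pairwise} gives pairwise uniformity of $(\gamma_i,\gamma_j)$ and hence zero covariance. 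Alternatively, within your framework, use independent per-coordinate offsets $\rho_i$. Reveal $\rho_i$ only for coordinates whose rounding boundary falls inside their coarse cell; with probability at least $1-d/q$ there are none.
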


\begin{proof}
We split the proof in five parts, covering privacy, error, private randomness complexity, public randomness complexity, and distributional properties, respectively.
\medskip

{\bf Privacy.}
Conceptually the mechanism ensures \(\eps\)-differential privacy by adding independent Laplace noise \(y \sim \text{Laplace}(\Delta_1(f)/\eps)^d\) to the value \(f(X)\).
The \(\ell_1\) sensitivity of \(f\) is \(\Delta_1(f)\), so Laplace noise with scale \(\lambda = \Delta_1(f) / \eps\) ensures the required privacy guarantee for \(f(X) + y\).
The subsequent rounding step maps the noisy result to the nearest point on an \(\xi\)-spaced grid shifted by the public offsets \(\xi\,\gamma_i\) with \(\gamma_i=(a i + b)\bmod 1\) and \((a,b)\sim\mathrm{Uniform}([0,1)^2)\).
The public randomness \((a,b)\) is independent of the data, so post-processing preserves privacy for every fixed choice of \((a,b)\).
\medskip

{\bf Error bound.}
As for the error incurred by the mechanism we note that
\[ \| \M(f(X)) - f(X) \|_\infty \leq \|y\|_\infty + \xi / 2, \]
since the error in each coordinate is bounded by the sum of the perturbation given by \(y\) and the rounding to the nearest grid point which changes the value by at most \(\xi/2\).
By a union bound over the coordinates, the probability that the noise exceeds \(\xi / 2\) is:
\[ \Pr[ \|y\|_\infty > \xi / 2 ] \leq d\, \exp\left( - \frac{\xi / 2}{\lambda} \right),\]
which is less than \(\beta\) when \(\xi \geq 2\lambda \log(d/\beta)\).
Thus,  with probability at least \(1-\beta\) the total error is bounded by \(\xi\).
Substituting \(\lambda = \Delta_1(f)/\epsilon\) completes the computation.

{\bf Private randomness complexity.}
Fix an arbitrary shift vector $\gamma\in\mathbb{R}^d$ and condition on $\gamma$ throughout this part; we only use its distribution at the end.
For a given input $z=f(X)$, define the (deterministic) baseline output
\[
\M_{\gamma,0}(z)\ :=\ \xi\cdot\Big(\Big\lfloor\frac{z}{\xi}-\gamma+\tfrac{1}{2}\Big\rfloor+\gamma\Big),
\]
obtained by setting the noise to zero.
Let $Y=(Y_1,\ldots,Y_d)\sim\mathrm{Laplace}(\lambda)^d$.
The private entropy for this input and shift is
\( H(\M_{\gamma,Y}(z)) \),
where $\gamma$ is the public randomness.
Since the coordinates of $Y$ are independent and rounding is coordinate-wise, $\M_{\gamma,Y}(z)$ is a deterministic function of independent integer ``step counters'' \(K_i\in\mathbb{Z}\) defined by
\(
 \M_{\gamma,Y}(z)_i\ =\ \M_{\gamma,0}(z)_i\ +\ K_i\,\xi
\).
Hence
\begin{equation}\label{eq:entropy-sum}
H(\M_{\gamma,Y}(z))\ \le H(K_1,\ldots,K_d\mid\gamma,z)
\ =\ \sum_{i=1}^d H(K_i\mid\gamma,z),
\end{equation}
where the last equality uses independence of \(Y\).
For $i\in[d]$ let $\tau_i=\tau_i(\gamma,z)\in[0,\xi/2]$ be the distance from $z_i$ to the nearest boundary of the randomly shifted grid on the $i$th coordinate.
That is, 
\[\tau_i:=\min_{k\in\mathbb{Z}}\{|z_i-\xi(k + \gamma_i + 1/2)|\}.\]
For \(p_{0,i}(\gamma,z) :=\ 1-e^{-\tau_i/\lambda} \) we have the following lower bound on the probability of $K_i=0$:
\[
\Pr[\,K_i=0\mid\gamma,z\,]\ \geq \Pr[\,|Y_i|<\tau_i\,]\ =\ p_{0,i}(\gamma,z).
\]
For $k\ge1$, crossing $k$ grid boundaries requires $|Y_i|\ge \tau_i+(k-1)\xi$, so the one-sided tail satisfies
\[
\Pr[\,|K_i|\ge k\mid\gamma,z\,]\ \le e^{-(\tau_i+(k-1)\xi)/\lambda}
\ =\ e^{-\tau_i/\lambda}\,q^{\,k-1},\text{ where } q:=e^{-\xi/\lambda}.
\]
Thus $|K_i|$ is stochastically dominated by a geometric distribution on $\{1,2,\ldots\}$ with ratio $q$, and the signed tail $\{K_i=\pm k: k\ge1\}$ by a two-sided geometric (discrete Laplace) with the same ratio.
Let $h_2$ be the binary entropy function, and let
\(
H_\mathrm{geo}(q)\ :=\ -\log_2(1-q)\ -\frac{q}{1-q}\,\log_2 q
\)
be the entropy of a geometric$\,(1-q)$ distribution on $\{1,2,\ldots\}$.
Encoding first the event $\{K_i=0\}$ and then the signed magnitude gives
\begin{equation}\label{eq:Hi-bound}
H(K_i\mid\gamma,z)\ \le h_2\big(p_{0,i}(\gamma,z)\big)\ +\ \big(1-p_{0,i}(\gamma,z)\big)\,\Big(\,1\ +\ H_\mathrm{geo}(q)\,\Big).
\end{equation}
Combining \eqref{eq:Hi-bound} over $i$ with \eqref{eq:entropy-sum} yields a per-input, per-shift entropy bound.
\medskip

\emph{Averaging over the shift.}  In our implementation the public shift is per-coordinate: \(\xi\,\gamma_i\) with \(\gamma_i=(a i + b)\bmod 1\) and \((a,b)\sim\mathrm{Uniform}([0,1)^2)\).
For each fixed input \(z\) and coordinate \(i\), the marginal \(\gamma_i\sim\mathrm{Uniform}([0,1))\) is independent of \(Y_i\), so the remainder \((z_i-\xi\gamma_i)/\xi \text{ mod } 1\) is uniform on \([0,1)\).
Consequently, the distance \(\tau_i(\gamma,z)\) from \(z_i\) to the nearest half-integer boundary is uniform on \([0,\xi/2]\).
Hence
\[
\overline{p}_0\ :=\ \EE_{\gamma}[\,p_{0,i}(\gamma,z)\,]\ =\ \EE_{T\sim\mathrm{Uniform}([0,\xi/2])}[\,1-e^{-T/\lambda}\,]
\ =\ 1\ -\ \frac{2\lambda}{\xi}\Big(1-e^{-\xi/(2\lambda)}\Big),
\]
independent of $i$ and $z$.
By concavity of $h_2$ and linearity of expectation,
\begin{equation}\label{eq:per-coordinate-avg}
\EE_{\gamma}\big[\,H(K_i\mid\gamma,z)\,\big]
\ \le h_2(\overline{p}_0)\ +\ (1-\overline{p}_0)\,\Big(\,1\ +\ H_\mathrm{geo}(q)\,\Big),
\qquad q=e^{-\xi/\lambda}.
\end{equation}
Finally,
\begin{equation}\label{eq:entropy-final}
\EE_U\big[H(\M_U(z))\big]
\ \le d\,\Big[\,h_2(\overline{p}_0)\ +\ (1-\overline{p}_0)\,\big(1+H_\mathrm{geo}(q)\big)\,\Big],
\end{equation}
with $\overline{p}_0=1-\tfrac{2\lambda}{\xi}(1-e^{-\xi/(2\lambda)})$ and $q=e^{-\xi/\lambda}$.
\medskip
Combining the above, we obtain an explicit asymptotic bound on the expected private entropy.
Under the accuracy condition $\xi \ge 2\lambda \log(d/\beta)$ established above (implying $\xi/\lambda \gg 1$ in the regimes of interest), the geometric ratio $q=e^{-\xi/\lambda}$ satisfies $q\ll1$, and
\(1-\overline{p}_0\ =\ \tfrac{2\lambda}{\xi}\big(1+O(e^{-\xi/(2\lambda)})\big)\).
Using $h_2(1-x)\le x\log_2(\tfrac{e}{x})$ and $H_\mathrm{geo}(q)=\Theta(q\log\tfrac{1}{q})$ for small $q$, inequality~\eqref{eq:entropy-final} yields
\[
\EE_U[H(\M_U(z))]\ \le C\,d\cdot\frac{\lambda}{\xi}\,\log_2\frac{\xi}{\lambda}
\]
for some absolute constant $C>0$.
This establishes the claimed bound in part~(b).
\medskip

{\bf Public randomness complexity (for part (c)).}
Fix $f(X)$ and the private noise $y$.
The output changes only when, for some coordinate $i\in[d]$, the function
\[
\gamma_i \mapsto \frac{f(X)_i+y_i}{\xi}-\gamma_i+\tfrac12
\ =\ \frac{f(X)_i+y_i}{\xi}-\big((a\cdot i + b)\bmod 1\big)+\tfrac12
\]
crosses an integer.
Equivalently, $(a,b)\in[0,1)^2$ lies in a set where $(a i + b)\bmod 1$ is within distance~$\eta$ of a (fixed) boundary point $r_i\in[0,1)$ determined by $f(X)_i$ and $y_i$.
For a precision parameter $\eta\in(0,1)$, let
\[
\mathcal{B}_\eta\ :=\ \bigcup_{i=1}^d \Big\{(a,b)\in[0,1)^2\ :\ \dist\big((a i + b)\bmod 1,\ r_i\big)\le \eta\Big\}.
\]
Each set inside the union is the preimage (under the map $(a,b)\mapsto (a i+b)\bmod 1$) of an interval of length $2\eta$ in $[0,1)$, so its total area in $[0,1)^2$ is at most $2\eta$.
By the union bound, the probability mass of
\(\mathcal{B}_\eta\) is at most \(2d\,\eta\).
Now choose $\eta:=\beta/(2d)$.
Then $\mu(\mathcal{B}_{\eta})\le \beta$.
If we quantize $a,b$ to $m$ bits each with
\[
    m=\left\lceil \log_2\frac{4d^2}{\beta}\right\rceil,
\]
then the rounded pair $(\tilde a,\tilde b)$ satisfies
\[
    \dist\big((\tilde a i+\tilde b)\bmod 1,(a i+b)\bmod 1\big)
    \le (i+1)2^{-m}
    \le \eta
\]
for every $i\in[d]$.
Thus, for every $(a,b)\notin\mathcal{B}_{\eta}$, the rounded pair $(\tilde a,\tilde b)$ determines the same output as $(a,b)$.
Therefore, an implementation that outputs $\bot$ whenever $(a,b)\in\mathcal{B}_{\eta}$ uses at most $2m=O(\log(d/\beta))$ public bits and triggers $\bot$ with probability at most $\beta$.
This proves part (c).
\medskip

{\bf Public randomness complexity and distributional properties (for part (d)).}
If we allow unbounded public randomness, we can reveal the bits of $(a,b)$ \emph{incrementally} until $(a,b)\notin\mathcal{B}_{\eta}$ is certified; with probability at least $1-\beta$ this occurs by $m=\lceil\log_2(4d^2/\beta)\rceil$ bits per parameter, so with probability $1-\beta$ the public randomness used is $O(\log(d/\beta))$ bits (and on the rare event we reveal more bits).
Since $(a,b)$ are uniform on $[0,1)^2$, the induced shifts $\gamma_i=(a i + b)\bmod 1$ satisfy:
  (i) for each $i$, $\gamma_i\sim\mathrm{Uniform}([0,1))$; and
  (ii) for $i\neq j$, the pair $(\gamma_i,\gamma_j)$ is uniform on $[0,1)^2$.
This \emph{pairwise independence} property is a variant of classical constructions of pairwise independent hash functions~\cite{carter1979random}, see Lemma~\ref{lem:affine-pairwise} in Appendix~\ref{app:affine} for a proof.
Consequently, with the round-to-nearest construction,
the per-coordinate quantization errors $E_i=\M(f(X))_i-(f(X)_i+y_i)$ are pairwise independent and uniform on $[-\xi/2,\xi/2]$, independent of $(f(X),y)$, and in particular have mean zero.
This proves part~(d).
\end{proof}

\section{Pairwise Independence of Affine Shifts}\label{app:affine}

We now show the pairwise independence of the shift values \(\gamma\) in the dithered rounding scheme.
The proof is analogous to Carter and Wegman's classical proof of pairwise independence of random linear functions over finite fields~\cite{carter1979random}; we include it for completeness.
\begin{lemma}\label{lem:affine-pairwise}
Let $a,b\sim\mathrm{Uniform}([0,1))$ be independent, and for each integer $t$ define
\(
\gamma_t := (a t + b)\bmod 1 \in [0,1).
\)
Then for every $t$ we have $\gamma_t\sim\mathrm{Uniform}([0,1))$, and for distinct integers $t\neq s$ the pair $(\gamma_t,\gamma_s)$ is uniform on $[0,1)^2$.
In particular, $\gamma_t$ and $\gamma_s$ are (pairwise) independent.
\end{lemma}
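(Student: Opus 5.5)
Both claims can be checked by conditioning on $a$ and exploiting that $b$ acts as a uniform translation on the circle $\mathbb{R}/\mathbb{Z}$. For a single index $t$, I would fix $a$ and note that $b\mapsto (at+b)\bmod 1$ is a rotation of $[0,1)$ viewed as the circle, so it preserves Lebesgue measure. Hence $\gamma_t\mid a\sim\mathrm{Uniform}([0,1))$ for every $a$, and integrating over $a$ gives the marginal claim. This is the same measure-preserving-shift argument already used in the proof of Lemma~\ref{lem:full_noise_dithered_gaussian}.

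For the pair, I would show that the joint law of $(\gamma_t,\gamma_s)$ on the torus $\mathbb{T}^2=[0,1)^2$ is Haar (Lebesgue) measure. It suffices to check that all Fourier coefficients with $(m,n)\neq(0,0)$ vanish:
\[
\E\bigl[e^{2\pi i(m\gamma_t+n\gamma_s)}\bigr]=\E\bigl[e^{2\pi i (mt+ns)a}\bigr]\,\E\bigl[e^{2\pi i (m+n)b}\bigr],
\]
where reduction mod $1$ is irrelevant because $m,n\in\ZZ$, and $a,b$ are independent. If $m+n\neq 0$, the $b$-factor is $\int_0^1 e^{2\pi i(m+n)b}\,db=0$. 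If $m+n=0$, then $mt+ns=m(t-s)$, which is a nonzero integer because $m\neq 0$ and $t\neq s$, so the $a$-factor vanishes. A probability measure on $\mathbb{T}^2$ is determined by its Fourier coefficients (uniqueness for trigonometric polynomials, which are dense in $C(\mathbb{T}^2)$), so $(\gamma_t,\gamma_s)$ is uniform on $[0,1)^2$. Uniformity of a product measure then gives pairwise independence directly.

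An elementary alternative is to condition on $a$: then $(\gamma_t,\gamma_s)$ lies uniformly on the line $\{(x,x+a(s-t))\bmod 1\}$ in the torus. Averaging over $a$ is the same as averaging the offset $c=a(s-t)\bmod 1$, which is uniform because $a\mapsto a k\bmod 1$ pushes Lebesgue measure forward to Lebesgue measure for any nonzero integer $k$ (the map is $|k|$-to-one with slope $k$). This route would then need the change of variables $(x,c)\mapsto(x,x+c)$, a measure-preserving shear of the torus. The main point needing care is exactly this step, where the integrality of $t-s$ is essential; the Fourier argument handles it most cleanly, so that is the route I would write up.
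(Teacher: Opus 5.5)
Your proof is correct, but for the joint law it takes a different route from the paper. The paper conditions on $a$ and computes $\Pr_b[\gamma_t\in I,\gamma_s\in J\mid a]=|J\cap(I-\delta)|$ with $\delta=\{(t-s)a\}$. It then shows $\delta$ is uniform because $a\mapsto\{ka\}$ is $|k|$-to-one and measure-preserving for a nonzero integer $k$, and integrates over $\delta$ with Fubini to obtain $|I|\,|J|$. This is essentially the elementary alternative in your last paragraph, checked on rectangles $I\times J$ rather than via a shear of the torus.

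Your character computation replaces this set-shifting bookkeeping with a two-case check, and it shows exactly where each hypothesis enters. Independence of $a$ and $b$ factors the expectation, and uniformity of $b$ kills the case $m+n\neq 0$. Integrality together with $t\neq s$ makes $m(t-s)$ a nonzero integer, so the $a$-factor vanishes.

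Your route also generalizes at once. For instance, it shows immediately that triples $(\gamma_{t_1},\gamma_{t_2},\gamma_{t_3})$ with distinct indices are never jointly uniform. The vector $m=(t_2-t_3,\,t_3-t_1,\,t_1-t_2)\neq 0$ satisfies $\sum_j m_j=0$ and $\sum_j m_jt_j=0$, so the corresponding Fourier coefficient equals $1$. Hence pairwise independence is the most this construction gives.

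The price is the uniqueness theorem for Fourier coefficients of probability measures on $\mathbb{T}^2$, which rests on density of trigonometric polynomials in $C(\mathbb{T}^2)$ plus Riesz representation. The paper's argument uses only translation invariance of Lebesgue measure and Fubini. It also needs the routine fact that values on measurable rectangles determine a probability measure on $[0,1)^2$, which the paper uses implicitly.
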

\begin{proof}
We work on the unit interval $[0,1)$ with addition taken modulo $1$.
Write $\{x\}$ for the fractional part of a real $x$.
Throughout, $a,b$ are independent and uniform on $[0,1)$.
\smallskip\noindent\emph{Uniform marginal.}
Fix an integer $t$.
For any measurable set $I\subseteq[0,1)$ and any fixed $a\in[0,1)$,
\[
\PP_b\!\big[\ \gamma_t\in I \bigm| a\ \big]\ =\ \PP_b\!\big[\,\{ta+b\}\in I\,\big]\ =\ \PP_b\!\big[\,b\in I-\{ta\}\,\big]\ =\ |I|,
\]
since $b\mapsto \{b+c\}$ is a shift on the circle and preserves the total length of sets in $[0,1)$.
Averaging over $a$ gives $\PP[\gamma_t\in I]=|I|$, i.e., $\gamma_t\sim{\rm Unif}[0,1)$.
\smallskip\noindent\emph{Pairwise independence.}
Let $t\neq s$ be integers, and let $I,J\subseteq[0,1)$ be measurable.
We show
\[
\PP\big[\ \gamma_t\in I,\ \gamma_s\in J\ \big]\ =\ |I|\,|J|.
\]
Condition on $a$.
For fixed $a$, using $b$ uniform and the shift argument as above,
\begin{align*}
\PP_b\!\big[\,\gamma_t\in I,\ \gamma_s\in J \bigm| a\,\big]
&= \PP_b\!\big[\fracpart(b+ta)\in I\quad\text{and}\quad\fracpart(b+sa)\in J\,\big] \\
&= \big|\, (I - \fracpart(ta)) \cap (J - \fracpart(sa)) \,\big|.
\end{align*}
Set $\delta:=\fracpart((t-s)a)\in[0,1)$ and note that $(I-\fracpart(ta))\cap (J-\fracpart(sa))=(J)\cap (I-\delta)$ up to a common shift, hence the length above equals $\phi(\delta):=|\,J\cap (I-\delta)\,|$.
Therefore
\[
\PP\big[\ \gamma_t\in I,\ \gamma_s\in J\ \big]\ =\ \EE_a\big[\,\phi(\{(t-s)a\})\,\big].
\]
We claim that $\{(t-s)a\}$ is uniform on $[0,1)$ when $a$ is uniform (for any nonzero integer $t-s$).
Indeed, partition $[0,1)$ into $|t-s|$ intervals of length $1/|t-s|$; multiplication by $|t-s|$ maps each interval linearly onto $[0,1)$, so for any measurable $B\subseteq[0,1)$ the preimage $\{a:\{(t-s)a\}\in B\}$ has size $|B|$.
Hence $\{(t-s)a\}\sim{\rm Unif}[0,1)$.
It remains to compute the average of $\phi(\delta)$ over uniform $\delta$.
By averaging over all possible shifts and using that translating a set within [0,1) does not change its total length,
\[
\int_0^1 \phi(\delta)\,d\delta\ =\ \int_0^1 \Big( \int_0^1 \mathbf{1}_J(x)\,\mathbf{1}_I(x+\delta)\,dx \Big)\,d\delta
\ =\ \int_0^1 \mathbf{1}_J(x)\,\Big( \int_0^1 \mathbf{1}_I(x+\delta)\,d\delta \Big)\,dx
\ =\ |J|\cdot |I|.
\]
Therefore $\EE_a[\phi(\{(t-s)a\})]=|I|\,|J|$, and the desired equality follows.
Since this holds for all measurable $I,J$, the pair $(\gamma_t,\gamma_s)$ is uniform on $[0,1)^2$, and in particular $\gamma_t$ and $\gamma_s$ are independent.
\end{proof}

\end{document}